\newcounter{IEEE@bibentries}
\renewcommand\IEEEtriggeratref[1]{%
  \renewbibmacro{finentry}{%
    \stepcounter{IEEE@bibentries}%
    \ifthenelse{\equal{\value{IEEE@bibentries}}{#1}}
    {\finentry\@IEEEtriggercmd}
    {\finentry}%
  }%
}
\pgfplotsset{compat=1.7}
\newtheorem{definition}{Definition}
\newtheorem{theorem}{Theorem}
\newtheorem{corollary}{Corollary}
\newtheorem{remark}{Remark}
\newtheorem{proposition}{Proposition}
\newcommand{\gtuple}{\langle V, E_T, E_B,$ $\iota_T, \iota_B, m, Obs, F \rangle}
\title{On the complexity of sabotage games for~network~security} %
\author{Dhananjay Raju, Georgios Bakirtzis, Ufuk Topcu\thanks{The University of Texas at Austin, \{draju, bakirtzis, utopcu\}@utexas.edu}}
\begin{document}
\maketitle

\begin{abstract}
Securing dynamic networks against adversarial actions is challenging because of the need to anticipate and counter strategic disruptions by adversarial entities within complex network structures. Traditional game-theoretic models, while insightful, often fail to model the unpredictability and constraints of real-world threat assessment scenarios. We refine \emph{sabotage games} to reflect the realistic limitations of the saboteur and the network operator. By transforming sabotage games into reachability problems, our approach allows applying existing computational solutions to model realistic restrictions on attackers and defenders within the game. Modifying sabotage games into dynamic network security problems successfully captures the nuanced interplay of strategy and uncertainty in dynamic network security. Theoretically, we extend sabotage games to model network security contexts and thoroughly explore if the additional restrictions raise their computational complexity, often the bottleneck of game theory in practical contexts. Practically, this research sets the stage for actionable insights for developing robust defense mechanisms by understanding what risks to mitigate in dynamically changing networks under threat.
\end{abstract}

\section{Introduction}

Formal methods help develop secure systems, providing systematic guarantees for security properties, such as availability, in complex networks~\cite{schaffer:2016,terBeek:2022}. In particular, two-player graph games, effectively capturing attack patterns and defense dynamics, constitute a promising avenue for visualizing and analyzing attacker-defender scenarios~\cite{aslanyan:2016,baier:2008,kordy:2014}. In this work, we consider attacker-defender interactions in game-theoretic formulations to model network scenarios by encoding both controllable---actions by the defender---and uncontrollable---exploits by the attacker---plays in the game without making specific assumptions regarding the types of attacks~\cite{charterjee:2012}.

 For example, we can model active deception using a two-player graph game to model incomplete information. An administrator (defender) employs honey-pots, access control, and online network reconfiguration to conduct deceptive planning against malicious intrusions~\cite{horak:2017}. This model encodes a defender, who modifies a network graph to prevent an attacker from intruding into the system. We focus on security problems involving a violation of availability~\cite{bau:2011}, in which the system must guarantee the satisfaction of the desired property (B\"uchi) under node deletion by an attacker. We choose to generalize \emph{sabotage games}---a two-player graph reachability game on evolving structures~\cite{vanBenthem2005}---to address availability comprehensively in the game-theoretic context.
 
In the classical formulation of a sabotage game, one player called the \emph{traveler} representing the system tries to get from an initial vertex to a fixed set of final vertices. In contrast, the other player, the \emph{saboteur} representing the attacker, tries to prevent traveler actions by deleting edges from the graph. Sabotage games serve as an analysis framework for dynamic network assessment~\cite{catta2023attack}, for example, to design reliable networks for transportation or trustworthy information exchange~\cite{Aucher2018ModalLO, Kvasov2016Mar}.

Despite the applications of sabotage games, the classical formulation makes unrealistic assumptions that do not hold in security practice. Consider, for example, a physical security problem of determining if a possible infiltration route exists through a region demarcated by the terrain's existing roads, paths, and physical characteristics, modeled as a graph. A saboteur aims to prevent infiltration by using traps corresponding to deleting edges. If the saboteur has unrestricted mobility and no budget constraints, which means she can remove any number of edges, then the classic formulation adequately models this scenario. However, the saboteur does not have unrestricted mobility and an unlimited budget in practice. The saboteur may have to use vital resources to delete a link, thus limiting the number of edges she can delete. The classic formulation is also not suitable for modeling travelers with \emph{imperfect information}. Suppose, then, that the traveler is a robotic agent relying on its sensors; we must account for observational restrictions due to limited sensing capabilities.

We extend the classic formulation of sabotage games, address these shortcomings for dynamic network security problems, and study their potential efficacy using tools from computational complexity. In the proposed formulation of sabotage games, the saboteur and traveler start at designated vertices and move along their edges. The edge sets for the saboteur may differ from that of the traveler, allowing the saboteur to have different (maybe greater) mobility than the traveler. Because the saboteur and the traveler move on other edges, the ability of the saboteur to make topological changes to the underlying graph is accomplished by letting the saboteur delete vertices instead of edges. 

The scenarios encountered frequently resemble complex, adversarial games in network security. These situations demand not only effective solutions but also computationally feasible ones. The balance between strategic effectiveness and computational practicality is critical. In this context, our study delves into the complexities inherent in such security challenges. We focus on a classic variant of sabotage games, specifically those involving the strategic deletion of network connections.
Our approach involves rigorously modeling this modified variant of sabotage games. This modeling is crucial as it enables us to accurately capture the essence of network security scenarios where adversaries strategically disrupt connectivity. More importantly, through this modeling, we demonstrate that solving these games requires considerable complexity. We establish that solving this particular game variant falls into the category of PSPACE-hard problems.
This classification into PSPACE-hard is significant. It implies that the problem, like many real-world security challenges, may have yet to find and verify solutions, mirroring the often taxing computational resources required in network defense. Thus, our study extends theoretical understanding and sheds light on the practical aspects of implementing security strategies in complex network environments, where computational considerations are as integral as the strategies.

\vspace{1em}
\noindent
\textbf{Contributions}

We use a reachability game on a graph of size exponential in the size of the graph and budget of the saboteur to find a winning strategy for the traveler if one exists. This approach allows a uniform way to handle all the parameters in a security scenario (\cref{sec:scenarios}). Complexity results vary depending on the restrictions required for a specific security game formulation (\cref{tab:complexity}).
\begin{enumerate}[leftmargin=*]
    \item Solving the variant of sabotage games for network security is EXPTIME-complete. We show the matching lower bound for hardness by encoding the acceptance of an input string by an alternating Turing machine~(ATM)%
    using polynomial space into the existence of a winning strategy for the traveler in a sabotage game. The saboteur and traveler use their moves to mimic the moves of the $\forall$ and $\exists$ player, respectively. The ATM accepts the word only if the traveler has a strategy to reach a final vertex.
    \item When defenders have an unlimited budget, the game ends in polynomial (in the size of graph) rounds, which implies containment in PSPACE. By exploiting this bound, we provide a quantified Boolean formula (QBF) encoding for sabotage games. The PSPACE-hardness of the classical formulation of sabotage games ensures that the problem is PSPACE-complete.
\end{enumerate}

\begin{table}[t!]
\renewcommand{\arraystretch}{1.3}
    \caption{The computational complexity of finding a strategy for the traveler against a single saboteur.}
    \begin{tabularx}{\columnwidth}{ccc}
\textbf{Saboteur restriction} & \textbf{Traveler restriction} & \textbf{Computational complexity} \\[1em]
(B1)  & (T1) (T2) (T3) & PSPACE-complete  \\
(B2)  & (T1) & PTIME \\
(B2)  & (T2) & PSPACE-complete \\
(B2)  & (T3) & EXPTIME-complete \\
    \end{tabularx}
    \label{tab:complexity}
\end{table}

From the saboteur's perspective, travelers with limited sensing do not improve complexity regarding solution time. Indeed, the saboteur must proactively assume that the attacker has perfect information and may have access to an oracle that correctly reveals or guesses the deleted vertices correctly. When the saboteur has a fixed budget, the problem is solvable in PTIME (it is EXPTIME otherwise). The result of the reachability game also lets us determine whether an initial configuration of deleted vertices would allow the saboteur to win the game.

\subsection{Networks under threat} \label{sec:scenarios}
We present a few simplified scenarios to illustrate the applicability of sabotage games.

\begin{figure}
    \centering
    \includegraphics[width=.8\linewidth]{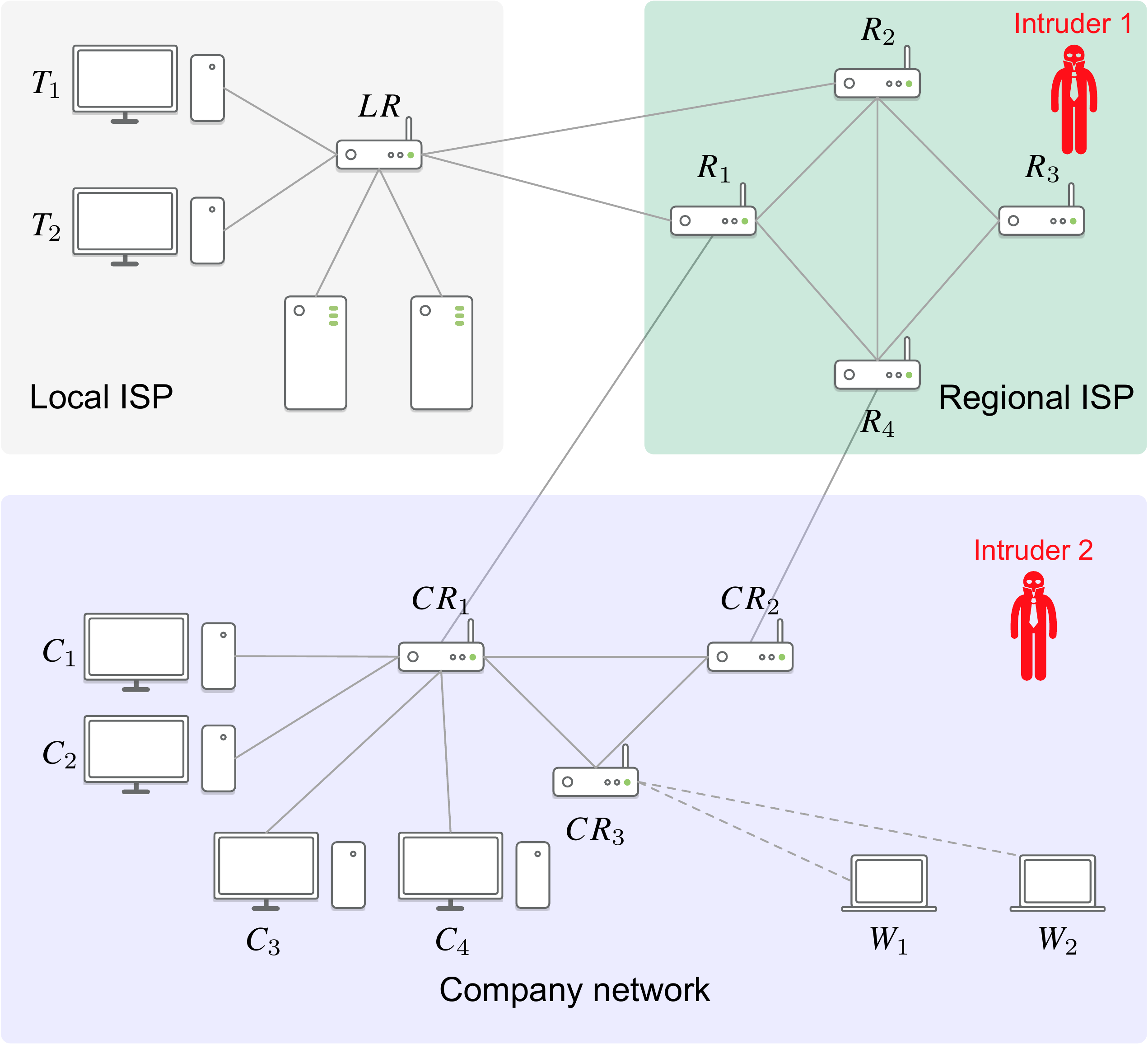}
    \caption{Intruder~1 tries to deny access to the company network by affecting the nodes in the regional internet service provider (ISP). Intruder~2 affects the company's nodes (network architecture adapted from Kurose and Ross~\cite{kurose:2001}).}
\label{fig:scenario}
\end{figure}

\paragraph{Scenario~1 [Dynamic network routing under threat]} 
We motivate the use of sabotage games using a dynamic network security scenario where intruders (corresponding to saboteurs) try denying access to company personnel (travelers). The traveler attempts to send a message to one of the terminal vertices, namely $C_1$, $C_2$, $C_3$, and $C_4$ in the company network~(\cref{fig:scenario}). Two terminals, $T_1$ and $T_2$, are connected to a local area service provider's router $LR$. This router $LR$ is connected to a regional service provider's router $R_1$ and $R_2$. The company has three routers: $CR_1, CR_2$, and $CR_3$. Two intruders operate in the regional internet service provider (ISP) and the company's network. They can disrupt the traffic by affecting the routers in the regional ISP and the company's network, respectively. Additionally, the intruder in the company network can only begin by attacking the router $CR_3$.
We analyze the availability problem under the following restrictions on the intruder (saboteur) and company user (traveler):
\begin{enumerate}
    \item[\textbf{(B1)}] There is no restriction on the number of nodes that the saboteur can affect.
    \item[\textbf{(B2)}] The number of nodes that the saboteur can affect is a fixed constant $m \in \mathbb{N}$.
    \item[\textbf{(T1)}] The traveler can observe the deleted nodes.
    \item[\textbf{(T2)}] The traveler cannot observe the deleted nodes.
    \item[\textbf{(T3)}] The traveler can observe the deleted nodes adjacent to them from an observation function.
\end{enumerate}
The administrator can use a fixed number of defense mechanisms, such as firewalls, to sabotage attacks. Additionally, for analyzing the worst case, we assume that the attacker observes the locations of the defenses.
We assume the attacker cannot observe the defenses set by the system's administrator when she is in a different location.
The fact that the traveler can observe the position of the saboteur does not automatically mean that she knows if the node has been disabled.

\paragraph{Scenario~2 [Network security]} 
We describe a scenario in which it is essential to analyze the sabotage game from the perspective of the saboteur.
A defender (saboteur) must disable a fixed number of nodes in a network represented by a directed graph to protect the network from infiltration by an intruder (traveler). The defender disables the nodes to respond to the intruder's actions (traveler) and does not disable nodes apriori. We capture budget-based constraints for the defender by requiring that it may disable at most $m$ nodes. The intruder starts from a node and either attempts to reach a final node once or visits it infinitely often. We can think of the last vertices as having some secret the defender must protect from intruders.

Consider a scenario where a  supplier robot must deliver resources to a drop-off zone in a zone represented by a directed graph. The supplier robot must avoid adversarial action while avoiding traps placed by the adversary. The adversary's objective is to prevent the supplier from reaching one of the drop zones. Like the previous scenario, the supplier (traveler) has limited observability. Unlike the last scenario network, the saboteurs do not have unrestricted mobility: \textbf{(B3)} The edge set of the saboteurs is not a complete graph.

From the traveler's perspective, the observability of defenses/security controls---deleted vertices in the game graph---changes the problem's difficulty. The attacker can actively determine whether a vertex has been deleted before traversing a path. For this reason, we also address the effect of imposing limited sensing on the traveler. 

\section{Sabotage games}
\label{sec:sabotage}

Sabotage games are played between a traveler and a saboteur on a graph with two sets of edges. The traveler moves along the traveler's edges, and the saboteur moves on the saboteur's edges. The saboteur changes the graph's structure by marking and deleting vertices. The traveler must avoid the saboteur and marked vertices while satisfying a specific objective.

\begin{definition}
\label{defn:sabotage}
A sabotage game $G$~is a tuple $\langle V, E_T, E_B,$ $\iota_T, \iota_B, m, Obs, F \rangle$ composed of the following data. 
\begin{itemize}[leftmargin=*]
    \item A set of vertices (or positions), $V$.
    \item Sets of directed edges over $V$, $E_T$ and $E_B$.
    \item Initial positions  $\iota_T$ and $\iota_B$  of the traveler and intruder (resp.) .
    \item The budget for the intruder, $m \in \mathbb{N} \cup \{ \infty \}$.
    \item The observation function $Obs\colon V \to 2^V$ for the traveler.
    \item The set of final vertices, $F \subseteq V$.
\end{itemize}
\end{definition}
\noindent Formally, a \emph{traveler} and a \emph{saboteur} play a game on a \emph{directed} graph $G = \left(V,E_T,E_B\right)$ with two edge sets from initial locations $\iota_T$ and $\iota_B$, respectively. The game is played over a sequence of discrete-time steps or \emph{rounds}, with the traveler going first. The traveler and saboteur play in alternation. The traveler can move to an adjacent vertex with respect to $E_T$ during the traveler's turn. During the saboteur's turn, she moves to an adjacent vertex with respect to $E_B$. She can mark (delete) any of the two vertices corresponding to the edge provided the vertex in question is not occupied by the traveler and has the budget to do so.
The saboteur can mark at most $m \in \mathbb{N}$ vertices, and the saboteur has \emph{unlimited budget} when $m = \infty$. The traveler \emph{loses} the game when it occupies a marked vertex. 
The objective of the traveler is to either \begin{itemize}[leftmargin=*]
    \item reach a final vertex $v_f \in F$ (\emph{reachability}) or
    \item visit a final vertex infinitely often (B\"uchi). %
\end{itemize}

A \emph{configuration} is a triple $(t,b,T)$, where $t,b \in V$ and $T \in 2^V$.
A configuration encodes the positions of the traveler, the saboteur, and the positions of the vertices marked by the saboteur. 

\begin{definition}[Play]
A \emph{play} $\rho$ in the sabotage game is a (possibly infinite) sequence $C_1C_2C_3\cdots$ of configurations, where  $C_i=\left(t^i,b^i,T^i\right)$, such that for all $i$,
the following properties hold.
\begin{enumerate}[leftmargin=*]
    \item The initial configuration is $\left(t^1,b^1,T^1\right) = \left(\iota_T,\iota_B,\emptyset\right)$.
    \item The traveler and saboteur can only traverse through their edges $\left(t^i,t^{i+1}\right) \in E_T$ and $\left(b^i,b^{i+1}\right) \in E_B$.
    \item  The saboteur can only mark the vertices $T^i \subseteq T^{i+1} \subseteq T^i \cup \left\{b^i, b^{i+1}\right\}$.
    \item  the saboteur cannot exceed the allocated budget $|T^{i}| \leq m$; .
\end{enumerate}
\end{definition}

We introduce the observation function $Obs \colon V \to 2^V$ to capture scenarios in which the traveler cannot make perfect observations. However, there is an observational disparity between the saboteur and the traveler because the saboteur has perfect information about the marked vertices. %

The saboteur can play using the complete history of the game so far; consequently, the strategy function for the defender has the following structure. Let $\mathcal{V} = V^{2} \times \left(2^V\right)$.

\begin{definition}[Strategy for the saboteur]
 A strategy $\sigma$ for the saboteur is a partial function $\sigma\colon \mathcal{V}^+ \times V \to V \times \left(2^V\right)$. 
\end{definition}
The traveler can only use a history of what she observes to decide their next move, and their strategy is as follows.

\begin{definition}[Strategy for the traveler]
 A strategy $\pi$ for the attacker is a partial function $\pi\colon \mathcal{V}^+ \to V$.
\end{definition}

A play $\rho = C_1C_2\cdots$ is said to \emph{agree} with the strategy $\sigma$ for the saboteur if for all~$i\colon\left(b^{i+1},T^{i+1}\right) = \sigma\left(C_1C_2\cdots C_i, t^{i+1}\right).$ 
Similarly, a play $\rho = C_1C_2\cdots$ is said to \emph{agree} with a strategy $\pi$ for the traveler if for all $i \colon t^{i+1} = \pi\left(obs(C_1)\cdots obs(C_{i})\right).$

In a game with a reachability objective, a play $\rho$ is \emph{winning} for the traveler if, for some $i\in \mathbb{N}$, $t^i \in F$, that is, the traveler has reached a final state. %
A strategy is \emph{winning} if all plays that agree with it are winning.

\begin{remark}
We let the \emph{traveler make the first move}. However, the order of play can be flipped by adding a dummy vertex before the start vertex for the saboteur and allowing the saboteur to go first from the new vertex.
\end{remark}

\subsection{Dynamic network routing under threat as a sabotage game} 

\begin{figure}
    \centering
    \centering
    \includegraphics[width=.8\linewidth]{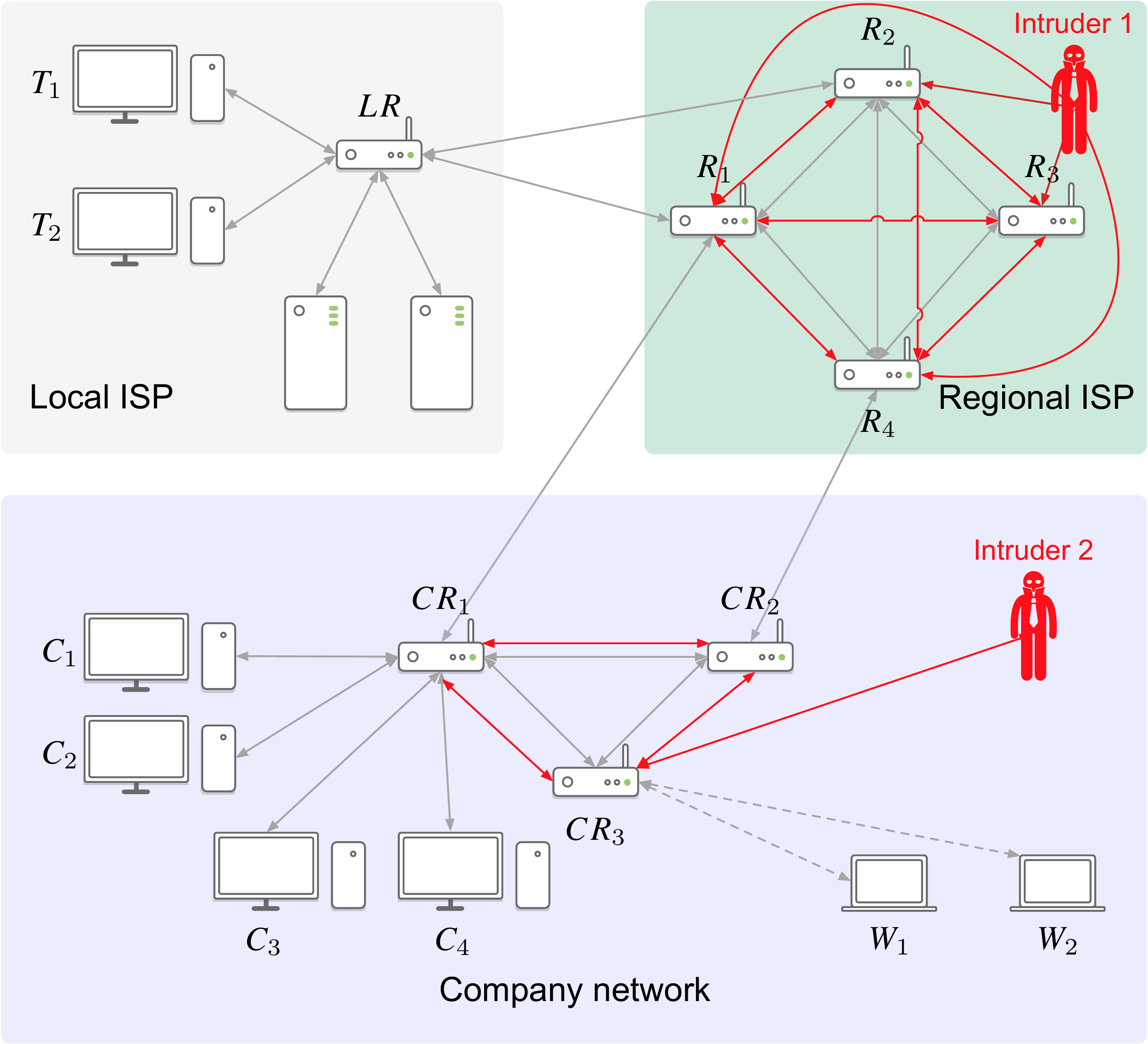}
    \caption{The sabotage game for the attack scenario in \cref{fig:scenario}.}
    \label{fig:sabotage_game_scenario}
\end{figure}

The security scenario (\cref{fig:scenario}) can be modeled as a sabotage game (\cref{fig:sabotage_game_scenario}) on the graph structure $G = \gtuple$ containing the following data.
\begin{itemize}[leftmargin=*]
    \item The vertices $V$ of the sabotage game correspond to the various components in the network. Two copies are introduced for each server, corresponding to the attacker using a mail or USB to infect the server.
    We introduce two dummy vertices to serve as the starting points for the two saboteurs, that is, the intruders.
    \item  The actions $E_T$ of the traveler are represented using black lines, and the actions $E_B$ of the saboteur are represented using red lines.
    \item The saboteurs are initially marked vertically, and the traveler's message is at the router at the local ISP ($LR$).
\end{itemize}
The saboteurs can prevent message transmission by disabling the routers via an exploit. Often, we cannot determine the intruders' following attack location. Thus, the edge set $E_B$ of the saboteur corresponds to a total graph that captures the lack of attack location knowledge. In other scenarios, where more information about the intruder is known (for example, attack patterns), we can appropriately remove the corresponding red-colored edges. We analyze the scenario where the message sent by the personnel is already at the local router $LR$. 
Without loss of generality, we assume that the saboteur cannot delete a vertex currently occupied by the traveler.

We analyze this scenario first assuming that only one intruder is operating the regional ISP and secondly with both intruders at the regional ISP network and company network (\cref{fig:sabotage_game_scenario}).

\paragraph{Intruder 1.}
When there is no limit on the number of vertices, the intruder can disable (\textbf{B1}), and the traveler has no visibility regarding these deleted vertices, there is no way for the message to be routed to one of the terminals in the company network. 

Consider the same scenario when the intruder (saboteur) has a budget of just one deletion, then we see that observability of the deletion plays a role. 
If the deletion is not observable, the traveler must proactively avoid suspect nodes (\textbf{T2}). In this scenario, the traveler does not have a winning strategy. However, when the traveler has complete visibility or can observe deleted vertices from adjacent vertices, the traveler has a way to route the message.

\begin{figure}
    \centering
    \centering
    \includegraphics[width=.8\linewidth]{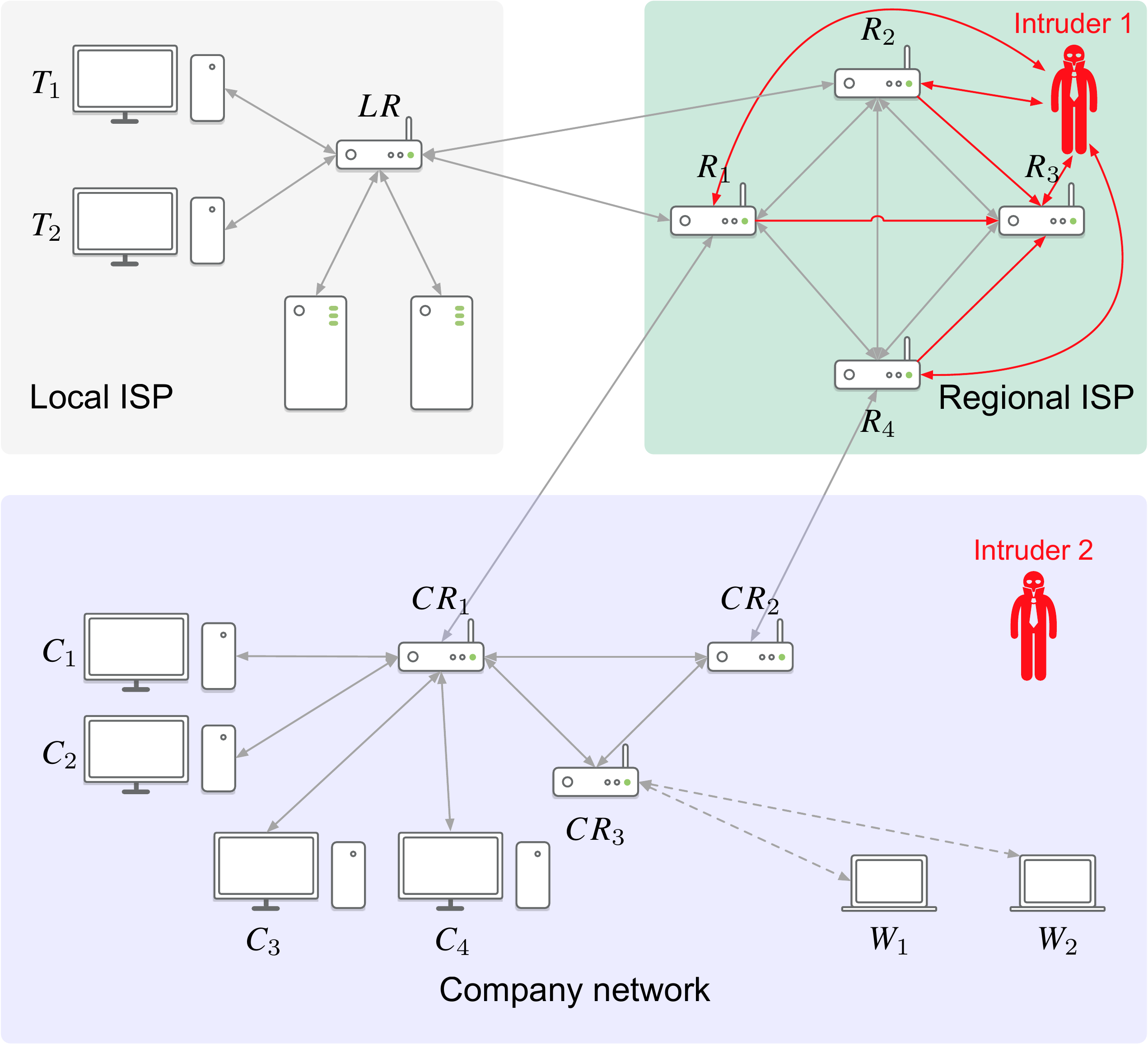}
    \caption{The sabotage game  with an attack timing pattern. }
    \label{fig:sabotage_game_timing}
\end{figure}

\paragraph{Intruders 1 \& 2.} In the presence of the two intruders, the saboteurs have a winning strategy to prevent the traveler even when the traveler has perfect observability. The strategy for the first intruder is to disable $R_1$ and for the second is to disable $CR_2$.

\paragraph{Incorporating attack timing patterns.}  We present a modification of the sabotage game presented previously~(\cref{fig:sabotage_game_timing}). In the new game presented, intruder~1, after affecting a node, needs two time steps to affect nodes $R_1$, $R_2$, and $R_4$ but needs only a time step to affect $R_3$. We accomplish this effect by modifying the edge set $E_B$ of the saboteur.  From the vertices corresponding to the routers $R_1$, $R_2$, and $R_4$, there are no out edges to each other. So, to mark these vertices, the saboteur (intruder~1) must move back to the initial vertex. However, to move and mark $R_3$, the attacker can directly move from $R_1$, $R_2$, and $R_4$ to $R_3$.

Obtaining the game graph for the sabotage game from the network diagram is straightforward, and from a modeling perspective, sabotage games are apt for dynamic network threat assessment. The complexity results tell us that the game is easy to solve from the perspective of the saboteurs. We can analyze the strategies of the saboteurs to understand the points of failure in a particular network. The stakeholders (for example, the owners, designers, cybersecurity experts, etc.) can solve problems of this type to gain insights into possible coordinated network attacks, which can then be used to erect appropriate defenses, thereby increasing network robustness.

\section{Solving dynamic network routing using~sabotage games}
\label{sec:limitBudget}
This section establishes the equivalence between sabotage games and reachability games, enabling the use of existing solvers to find attacker-defender strategies.

\subsection{Reachability games and complexity classes}
\label{sec:prelims}

This section provides a concise overview of reachability games on graphs and the various complexity classes we use.

\subsubsection{Reachability games}
A two-player \emph{reachability game}~\cite{McNaughton1993Dec,Chatterjee2011Sep,Gradel2002} is played on a directed graph $G = (V,V_R,V_S,E,F)$, which we call the \emph{arena}, between a reachability player and a safety player. The vertices $V$ of the arena are partitioned into those belonging to the reachability player $V_R$ and those belonging to the safety player $V_S$. A token is placed on a designated \emph{initial vertex} $v_{\eta}$ in $V$. The two players take turns moving the token along the graph's edges. The objective of the reachability player is to move the token to a vertex in $F \subseteq V$. The safety player tries to prevent the token from reaching a vertex in $F$. A play $\rho$ is a (possibly infinite) sequence $v_0 v_1\cdots$ of vertices such that $(v_i, v_{i+1}) \in E$, for all $0 \leq i$ and $v_0 = v_\eta$. The play $\rho$ is \emph{winning} for the reachability player if, for some $v_i$ in $\rho$, $v_i \in F$. A \emph{memoryless strategy} $\sigma\colon V_R \to V_S$ for the reachability player maps every vertex in $V_R$ to one of its successors in $V_S$. Memoryless strategies for the safety player are defined similarly. The play $\rho$ is said to \emph{agree} with the strategy $\sigma$ for player~$P$ if, $v_{i+1} = \sigma(v_i)$ whenever player $P$ has to play from $v_i$. A memoryless strategy is said to be \emph{winning} for player $P$ if all plays that agree with it win for player $P$. Such games are \emph{determined}, that is, one of the players has a memoryless winning strategy~\cite{zielonka:1998}, and the strategy for the winning player can be found in $\mathcal{O}\left(|V|+|E|\right)$ \cite{Emerson1991Oct}.~%

\subsubsection{Alternating Turing machine}
\label{sec:atms}

We present below one of the many equivalent ways of defining an ATM \cite{Chandra1981Jan} and explain the complexity classes we use. We assume that the tape alphabet $T$ is $\{\top,\bot\}$, the work tape encodes the input, and that the universal and existential states alternate. The blank tape symbol $\square$ is encoded by a fixed string $a \in T^*$. %

\begin{definition}
An ATM is a tuple $\langle Q,T,\delta,q_0,q_\mathsf{accept},q_\mathsf{reject} \rangle$ composed of the following data.
\begin{itemize}[leftmargin=*]
    \item A partition of the states into existential and universal, $Q = Q_\exists \uplus Q_\forall$.
    \item The tape alphabet, $T = \{\bot, \top \}$.
    \item A transition function $\delta = \delta_\exists \uplus \delta_\forall$, where 
    \begin{align*}
     \delta_\exists\colon& Q_\exists \times T \to 2^{Q \times T \times \{L,R\}}\text{ and}\\ \delta_\forall\colon& Q_\forall \times T \to 2^{Q \times T \times \{L,R\}}.
     \end{align*}
    \item The initial state, $q_0$.
    \item A unique accepting state, $q_\mathsf{accept}$.
    \item A unique rejecting state, $q_\mathsf{reject}$.
\end{itemize}
\end{definition}
 If an ATM is in the state $q \in Q$,  the value at the current tape head is $v$, and $\delta(q,v) = (q',v',z)$, where $z \in \{L,R\}$, then at the next step the value at the head position of the tape is $v'$, the machine will be in state $q'$, and the tape head will move left or right as given by $z$.
A \emph{configuration} of an ATM, consisting of the state $q$ and the tape contents, is \emph{final} if the state $q$ is either $q_\textsf{accept}$ or $q_\textsf{reject}$. Accepting an input string by an ATM is defined inductively, starting from final configurations that are \emph{accepting}. A final configuration is accepting if the state component is $q_\textsf{accept}$. Non-final configurations accept if the state is universal $Q_\forall$ and all the successor configurations accept or if the state is existential $Q_\exists$. At least one of the successor configurations is accepting. Finally, an ATM accepts a given input string $w \in T^*$ if the initial configuration with initial state $q_0$ and the tape's input string is accepted. 
A non-deterministic Turing machine (NDTM) is an ATM without universal states.

A deterministic Turing machine is an NDTM with $|\delta(q,t)|= 1$, for all $q \in Q$ and $t \in T$. PSPACE is the class of decision problems solvable by a deterministic Turing machine that uses several tape cells upper bound by a polynomial on the input length $n$. Formally, $$\text{PSPACE}=\bigcup_{k\geq0} \text{DSPACE}(n^k).$$ QBF is a Boolean satisfiability problem generalization in which existential and universal quantifiers can be applied to each variable. Put another way, it asks whether a quantified formula over a set of Boolean variables is true or false. QBF is the canonical complete problem for PSPACE \cite{Arora2009Apr}. 

\subsection{Traveler's perspective}
Initially, we study sabotage games in which the saboteur has a limited budget for marking (deleting) the vertices, and the traveler has limited observation.
The traveler can detect a marked vertex $v$ from a current vertex $u$ if $v \in Obs(u)$. Any vertex that the saboteur has visited and that never fell into the observation zone of the traveler so far in the game is a \emph{suspect} vertex for the traveler; that is, the traveler must avoid these vertices. Any encoding for the limited observation scenario must carry this information in the state space.

A straightforward encoding of the sabotage game in which the vertices capture \begin{itemize}[leftmargin=*]
    \item the current positions of the traveler and the intruder,
    \item the set of suspect locations,
    \item the set of marked locations (not visible to the traveler), and 
    \item the set of discovered locations.
\end{itemize}
In the above encoding, the saboteur updates the third component, which is not visible to the traveler. 
Hence, this encoding corresponds to an imperfect information game. Solving an imperfect information reachability game is an EXPTIME-complete problem~\cite{Reif1984Oct,chatterjee:2007}. 
Consequently, this naive approach would result in a 2EXPTIME cost because there is an exponential number of vertices.  %

However, this is not the true complexity of the problem.
We show that if the traveler has a winning strategy, it can be found in EXPTIME by reducing the sabotage game $G$ to a \emph{perfect information} reachability game on an arena $\bar{G}$ with an exponential (on the size of $G$) number of states. A key difference in this new encoding is that we do not explicitly maintain the set of vertices marked by the saboteur. We only intend to use this reduction to solve the sabotage game from the traveler's perspective.
Since the new reachability game itself is solvable in linear time~\cite{Mostowski1991,Emerson1991Oct}, we can find a winning strategy for the sabotage game, if one exists, in EXPTIME.  The following details are encoded into the components of the vertex set via the following data.
\begin{itemize}[leftmargin=*]
    \item The current positions of the traveler and the intruder.
    \item The set of suspect locations (locations visited by the intruder).
    \item The set of marked locations discovered by the traveler (at most $\leq m$ positions).
    \item The current turn (a record of the player to move).
\end{itemize}

\begin{theorem}
If the traveler has a winning strategy for a sabotage game~$G$, then it can be found in $\mathrm{EXPTIME}\left(\text{exp}(m,n)\right)$.
\label{theorem:inEXPTIME}
\end{theorem}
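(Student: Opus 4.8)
The plan is to construct a perfect-information reachability game $\bar{G}$ whose arena size is exponential in $m$ and $n = |V|$, show that the traveler wins in $G$ if and only if the reachability player wins in $\bar{G}$, and then invoke the cited linear-time solver for reachability games so that the overall cost is linear in the (exponential) size of $\bar{G}$. First I would make the vertex set of $\bar{G}$ explicit: each state records the four pieces of data listed above, namely the pair of current positions $(t,b) \in V \times V$, the set of suspect locations $S \subseteq V$ (vertices the saboteur has occupied but which the traveler has never observed), the set of discovered marked locations $D \subseteq V$ with $|D| \le m$, and a turn bit. The crucial modeling choice, as the excerpt stresses, is that $\bar{G}$ does \emph{not} record the true marked set $T$; instead it over-approximates the danger using $S$ together with $D$. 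I would count states: there are $n^2$ position pairs, $2^n$ possible suspect sets, at most $\sum_{k\le m}\binom{n}{k}$ discovered sets, and a constant number of turns, giving a bound that is exponential in $n$ and $m$, which I would package as $\mathrm{exp}(m,n)$.

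Next I would define the edge relation, partitioning $V(\bar{G})$ into reachability-player vertices (the traveler's turn) and safety-player vertices (the saboteur's turn). On a traveler move from $(t,b,S,D)$ the traveler picks $t'$ with $(t,t') \in E_T$; the move is \emph{forbidden} (no outgoing edge, or a redirect to a losing sink) exactly when $t'$ is known-dangerous, i.e. $t' \in D$ or $t' \in S$, since a suspect vertex may be marked and the traveler must proactively avoid it. Upon arriving at $t'$ the observation function fires: every vertex in $Obs(t')$ is removed from the suspect set and, if it was genuinely marked, promoted into $D$; I would make $\bar{G}$ branch over the saboteur's hidden choice of which observed suspects are truly marked, with the safety player resolving that branch so that the construction is sound against the worst case. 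On a saboteur move from $(t,b,S,D)$ the safety player picks $b'$ with $(b,b') \in E_B$, may add $b$ or $b'$ to the suspect set (modeling a fresh deletion, subject to the budget $|S \cup D| \le m$), and the game records the new position. The reachability target $F(\bar{G})$ is the set of states whose traveler component lies in the original $F$.

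The heart of the argument is the two-directional strategy transfer establishing $G$-winning $\iff$ $\bar{G}$-winning. For soundness I would show that a winning traveler strategy in $\bar{G}$ induces one in $G$: because $\bar{G}$ treats every suspect vertex as potentially marked and lets the safety player choose the worst-case truth of observed vertices, any $\bar{G}$-play that avoids the known-dangerous sink corresponds to a real play in which the traveler never steps on a marked vertex, and reaching $F(\bar{G})$ forces reaching $F$ in $G$. For completeness I would show the converse: a winning traveler strategy in $G$ (which, being an imperfect-information strategy, depends only on the observation history $obs(C_1)\cdots obs(C_i)$) can be replayed in $\bar{G}$, since the $(S,D)$ components are exactly a determinization of the observation history and thus carry no less information than the traveler actually has. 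The main obstacle is getting this over-approximation precisely right: I must verify that collapsing the hidden marked set $T$ into the suspect/discovered pair $(S,D)$ neither lets the traveler win spuriously (soundness) nor robs her of a genuine winning strategy (completeness), and in particular that the budget bookkeeping $|S \cup D| \le m$ together with the worst-case branching on observations faithfully mirrors the saboteur's true power. Once the equivalence holds, determinacy and the $\mathcal{O}(|V(\bar{G})| + |E(\bar{G})|)$ solver of the cited references yield the winning strategy in time linear in the size of $\bar{G}$, hence in $\mathrm{EXPTIME}(\mathrm{exp}(m,n))$, completing the proof.
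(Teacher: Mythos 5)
Your overall architecture is the same as the paper's: reduce to a perfect-information reachability game whose states record the two positions, a suspect set, a discovered-marked set of size at most $m$, and a turn bit; defer the saboteur's marking decisions until the traveler's observation forces a revelation; then solve the exponential-size arena with the linear-time reachability algorithm. Your insistence on proving both directions of the strategy transfer is, if anything, more careful than the paper, which asserts the equivalence without argument.

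However, there is a genuine flaw in your edge relation: the budget constraint $|S \cup D| \le m$ charges the saboteur for \emph{suspicions}, not for \emph{markings}. In the real game every vertex the saboteur visits becomes suspect at no cost, and the traveler, having imperfect information, must treat all of them as potentially marked so long as fewer than $m$ markings have been confirmed; the budget is consumed only by vertices that are actually marked. Concretely, take $m=1$ and let the saboteur visit $w_1$ and then $w_2$, neither ever falling into the traveler's observation zone. Against the real (hidden) saboteur the traveler must avoid both vertices, since either could be the single marked one and her strategy cannot depend on which; if her only routes to $F$ pass through $w_1$ or $w_2$ she has no winning strategy. In your arena the safety player may place at most one of $w_1,w_2$ into $S$, so the traveler simply walks through the other and wins --- the reduction is unsound. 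The paper's construction avoids this by growing $S$ unconditionally with every vertex the saboteur touches and imposing the budget only on the revealed set, via the condition $|T \cup P| \le m$ with $P \subseteq Obs(u)\cap S$. Relatedly, your blanket rule forbidding the traveler from entering any vertex of $S$ is too strong once that fix is made: when $|D| = m$ no remaining suspect can actually be marked, and a genuine winning strategy may need to traverse one; the paper instead lets the traveler enter suspect vertices and lets the saboteur punish the move only if she still has budget to reveal a marking there. You would need to repair both the budget bookkeeping and the entry rule before the soundness and completeness arguments you sketch can go through.
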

\begin{proof}
We prove the theorem for a reachability objective. 
We reduce the sabotage game played on graph $G$ to a full-information reachability game on the graph $\bar{G} = \left(\bar{V},\bar{E}\right)$ defined below. 
Let $V' = V\setminus\{v_f\}$ and $B =  2^{V'}$. Without loss of generality, we assume that the saboteur cannot move to the final vertex. %
The set of positions, $\bar{V} =  V \times V' \times B \times B \times \{t,b\}$ and $V_T = V \times V' \times B \times B \times \{t\}$. 
The initial position $v_\iota$ is~$\left(\iota_T,\iota_B,\{\iota_B\},\emptyset,t\right)$. The traveler wins if the token reaches a position in  $$\tilde{F} = \left\{(v_f,v,S,T,t) \mid v_f \in F, v \in V, S \subseteq B \text{ and } T \subseteq B \right\}.$$  
The edge set $\bar{E}$ is defined as follows.
\begin{enumerate}[leftmargin=*]
    \item \emph{The traveler moves to an unmarked, that is, non-deleted vertex}. If $u,u' \notin T$ and $(u,u') \in E_T$, then  $$(u,v,S,T,t) \rightarrow(u',v,S,T,b)$$ is an edge.
    \item \emph{The saboteur moves to an adjacent vertex and reveals the set $P \subseteq Obs(u)$ of marked vertices.} %
    $$(u,v,S,T,b) \rightarrow \bigg(u,v', \left(S\cup\{v,v'\} \right)\setminus (Obs(u)\setminus P), P \cup T,t \bigg)$$ is an edge if the following conditions are satisfied.
    \begin{enumerate}
        \item $v' \neq v_f$ and $(v,v') \in E_B$ (saboteur can only move along their edges),
        \item $|T \cup P| \leq m$  (only $m$ vertices can be marked) and
        \item $P \subseteq Obs(u) \cap S$ (set of newly detected marked vertices). 
    \end{enumerate}
        \item \emph{The traveler loops in the same vertex on losing.} If $u \neq v_f$ and $u \in T$, then 
        \begin{align*}
        (u,v,S,T,b) &\rightarrow (u,v,S,T,t)\text{ and}\\
        (u,v,S,T,t) &\rightarrow \left(u,v,S,T,b\right)
        \end{align*}
        are edges.
    \end{enumerate}
    In the edge set above, %
    the saboteur is forced to reveal the marked vertices that are observable from $u$. This edge affects the set $T$ of marked vertices known to the traveler and the set $S$ of suspect vertices. 
    
    By constructing $\bar{G}$, the traveler's winning strategy for the game $G$ exists if it can win the reachability game on $\bar{G}$. The size of graph $\bar{G} = \mathcal{O}\left(2^{n}n^{m+2}\right)$ and $E_m = \mathcal{O}\left(|\bar{G}|^2\right)$, where $n = |\bar{V}|$. If the traveler can win the reachability game, it has a memoryless winning strategy \cite{McNaughton1993Dec}. This memoryless winning strategy can be determined in $\mathcal{O}\left(|\bar{V_m} + \bar{E_m}|\right)$~\cite{Mostowski1991,Emerson1991Oct}.
Therefore, a winning strategy for the traveler can be found in time~$\mathcal{O}\left(4^n n^{2m}\right)$. 
\end{proof}

The reduction presented in the above theorem also holds verbatim for B\"uchi objectives. But, solving B\"uchi games entails a $\mathcal{O}\left(|V||E|\right)$ cost instead of the underlying $\mathcal{O}(|V|+|E|)$ cost.

We prove a matching exponential-time lower bound for the problem of determining the existence of a winning strategy for the system by reducing the problem of determining acceptance of a string by an ATM to a generalized sabotage game. 
Let $M = \langle Q,T,\delta, q_0,q_\mathsf{reject},q_\mathsf{accept}\rangle$ be an ATM that on any input of size~$n$ uses space at most~$P(n)$, where $P\colon \mathbb{N} \to \mathbb{N}$ is some fixed polynomial, that is, $M \in \mathrm{APSPACE} = \mathrm{EXPTIME}$~\cite{Chandra1981Jan}.

We construct a generalized sabotage game $G_{M,w}$ in which the traveler (system) has a winning strategy to \emph{reach} a final vertex if and only if the string $w \in \left\{\top,\bot\right\}^*$ is accepted by the ATM $M$.%
In the game $G_{M,w}$, the budget $m$ for the saboteur is one.  To prove hardness, it is enough to show the hardness concerning some observation function. Specifically, we present the reduction for the case when \emph{the traveler can view a marked vertex if and only if the traveler is adjacent to the vertex in question}. The saboteur simulates the $\forall$-player, and the traveler simulates the $\exists$-player. 
\noindent The sabotage game is played in the following three stages:
\begin{enumerate}[leftmargin=*]
    \item Initially, the players take turns to ask together and reveal the input $w$ encoded in the tape one position at a time.
    \item Then, they repeat ATM transitions while updating the tape position until an accept or reject state is reached.
    \item Finally, produce accept or reject.
\end{enumerate}

\paragraph{Stage 1 [Reading the input].} The traveler and the saboteur simulate the reading of the input using the \emph{input gadget}; we denote it by $I(w)$ and present the gadget for the input string $w = \top \cdot \bot \cdot \bot \cdot \top$ (\cref{fig:Input_gadget}). 
The vertices marked in blue are called \emph{tape vertices}; each tape cell can have the value $\top$ or $\bot$. %
The vertices colored in blue with the labels of form $P,V$ represent facts such as the tape at position $P$, which has the value $V$. We call these vertices the \emph{tape vertices}.
For example, the blue vertex~$1,\top$ corresponds to the value at the tape position~1 is $\top$. The traveler uses the vertices with the label of form~$P$ to request the saboteur to reveal the value at tape position $P$. In the example in the figure, the vertices $1,\top$; $2,\bot$; $3,\bot$ and $4,\top$ are called the \emph{vertices corresponding to the current configuration}, that is, the \emph{input configuration}.

\begin{remark}
As the saboteur and traveler simulate the ATM, the current tape configuration vertices change accordingly.
\end{remark}

The two players are initially located in the vertices marked as saboteur start and traveler start. The traveler begins the game by moving to the white vertex~1. This action corresponds to the traveler requesting the saboteur to reveal the value of the tape at position~1.
In the example, the value of the tape at position~1 is~$\top$. The saboteur has one of the following choices:

\begin{enumerate}[label=(\alph*),leftmargin=*]
    \item Incorrectly moves to $1,\bot$: In this case, the traveler can move to the orange vertex labeled \emph{input escape} and win the game in the next round.
    \item Correctly reveals the tape value by moving to the vertex $1,\top$. In this case, the traveler must not move to the input escape vertex (orange) in the next round, as she (the traveler) would subsequently lose the game. Indeed, the saboteur can catch the traveler in the next step using the dotted red edge.
\end{enumerate}

\begin{figure}[!t]
    \centering
    \includegraphics[width=.75\linewidth]{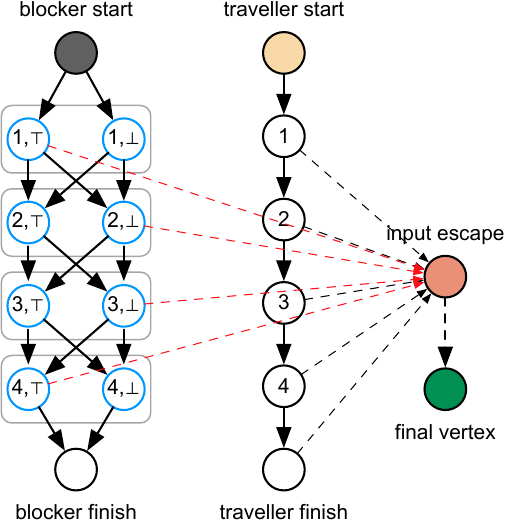}
    \caption{Input gadget $I(w)$ for the input $w = \top \cdot \bot \cdot \bot\cdot \top$. %
    }
    \label{fig:Input_gadget}
\end{figure}

For the input $w = \top \cdot \bot \cdot \bot \cdot \top$ (\cref{fig:Input_gadget}), the saboteur moves through the vertices $1,\top \rightarrow 2,\bot \rightarrow 3,\bot \rightarrow 4,\top$ and finally move to the vertex saboteur finish. Since the distance of the tape vertices (blue) from the white vertices is greater than one, the saboteur cannot observe these vertices. %
The following proposition is now direct.

\begin{proposition}
\label{prop:inp_gad}
In the input gadget $I(w)$, when the saboteur and the traveler reach vertices, saboteur finish and traveler finish (respectively), the traveler suspects the vertices corresponding to the input tape configuration.
\end{proposition}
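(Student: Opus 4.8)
The plan is to unfold the definition of a \emph{suspect} vertex---a vertex the saboteur has visited but that has never fallen into the traveler's observation zone---and to verify its two defining conditions for each tape vertex of the current configuration, under the hypothesis that both tokens have reached their respective finish vertices.

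First I would pin down the saboteur's trajectory. I claim that in any play reaching both saboteur finish and traveler finish, the saboteur must have traversed exactly the tape vertices $\{(P,w_P)\}_P$ of the true input configuration, one per tape position. The argument is the escape mechanism: when the traveler requests position $P$ by moving to the white request vertex labeled $P$, the saboteur either reveals the correct value $(P,w_P)$ or an incorrect one. An incorrect reveal enables the move to input escape, from which the traveler wins on the next round; such a play terminates in a traveler win and therefore never reaches traveler finish. An honest reveal, by contrast, makes the escape move losing, since the dotted red edge lets the saboteur capture the traveler, so the traveler instead continues to request the next position. Hence conditioning on the play reaching traveler finish forces the saboteur to have visited precisely $(1,w_1),\dots,(|w|,w_{|w|})$, which discharges the ``visited by the saboteur'' condition.

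Second I would verify that none of these tape vertices ever enters the traveler's observation zone. Under the observation function fixed for this reduction, the traveler observes a vertex if and only if it is adjacent to the traveler's current position, so $Obs(u)$ is the set of vertices adjacent to $u$. Along such a play the traveler only ever occupies white request vertices (together with the start and finish vertices), all of which lie at distance strictly greater than one from every blue tape vertex by the construction of $I(w)$. Consequently no tape vertex is ever adjacent to the traveler, so no tape vertex is ever observed, discharging the ``never observed'' condition. Combining the two conditions, every vertex $(P,w_P)$ of the input configuration is both visited by the saboteur and never observed, hence suspect; and since the saboteur visits no other tape vertices, the set of suspect tape vertices is exactly the input configuration.

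The main obstacle is the first step: arguing rigorously that reaching traveler finish forces honest reveals. This needs a careful case analysis of the escape gadget, confirming that a dishonest reveal is always punishable by the traveler (via input escape and the subsequent win) while an honest reveal is never profitably escapable (because of the dotted red capture edge), so that the only plays consistent with reaching traveler finish are the honest ones. The observation argument, by contrast, reduces to the static distance-greater-than-one property of the gadget and is routine.
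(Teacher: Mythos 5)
Your proposal is correct and follows the same route the paper takes: the paper declares the proposition ``direct'' after describing the input gadget, relying precisely on (i) the escape/capture mechanism forcing the saboteur to reveal the true tape values and traverse exactly the vertices of the input configuration, and (ii) the fact that the blue tape vertices lie at distance greater than one from the traveler's white request vertices, so they never enter the adjacency-based observation zone. Your write-up simply makes explicit the two conditions in the definition of a suspect vertex that the paper leaves implicit.
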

In other words, after $|w|+1$ rounds, the traveler suspects the vertices corresponding to the input configuration of the tape.  We show that the following invariant is maintained during the simulation of the ATM in the game.

\begin{tcolorbox}[breakable, pad at break*=1mm,
    colback=gray!5, arc=0pt, outer arc=0pt, boxrule=0pt, frame hidden]
\noindent \textbf{Invariant:} After the simulation of every ATM action, the traveler suspects \emph{vertices corresponding to the current configuration} of the tape. Moreover, the traveler knows that the other tape vertices are not marked.
\end{tcolorbox}

\paragraph{Stage 2 [Simulating ATM's, $M$, transitions on input, $w$].} 
Say the current state of the ATM is $q \in Q_\forall$ and suppose the head is located at a position $1 \leq p \leq |w|$. In the ATM, the $\forall$-player inspects the value at the tape position $p$ and decides the next state of the ATM while simultaneously the tape position $p$ with a new value before moving the tape head left or right. 
In the game $G_{M,w}$, the saboteur and the traveler simulate this transition with two sub-stages.
\begin{enumerate}
    \item[S1)] First, the two players agree on the current value at the tape position $t$ and then erase this tape position; that is, the traveler ensures that the vertices $p,\top$, and $p,\bot$ are not marked by the saboteur.
    \item[S2)] Next, the two players simulate the transition by moving to their vertices corresponding to the correct state and tape position. On the way to this vertex, the saboteur moves through the vertex $p,val$ corresponding to the new value $val$ of the tape position $p$ after the transition. Consequently, the traveler once again suspects the vertices corresponding to the current configuration of the tape.
\end{enumerate}

\begin{figure*}[t!]
    \centering
    \includegraphics[width=.8\linewidth]{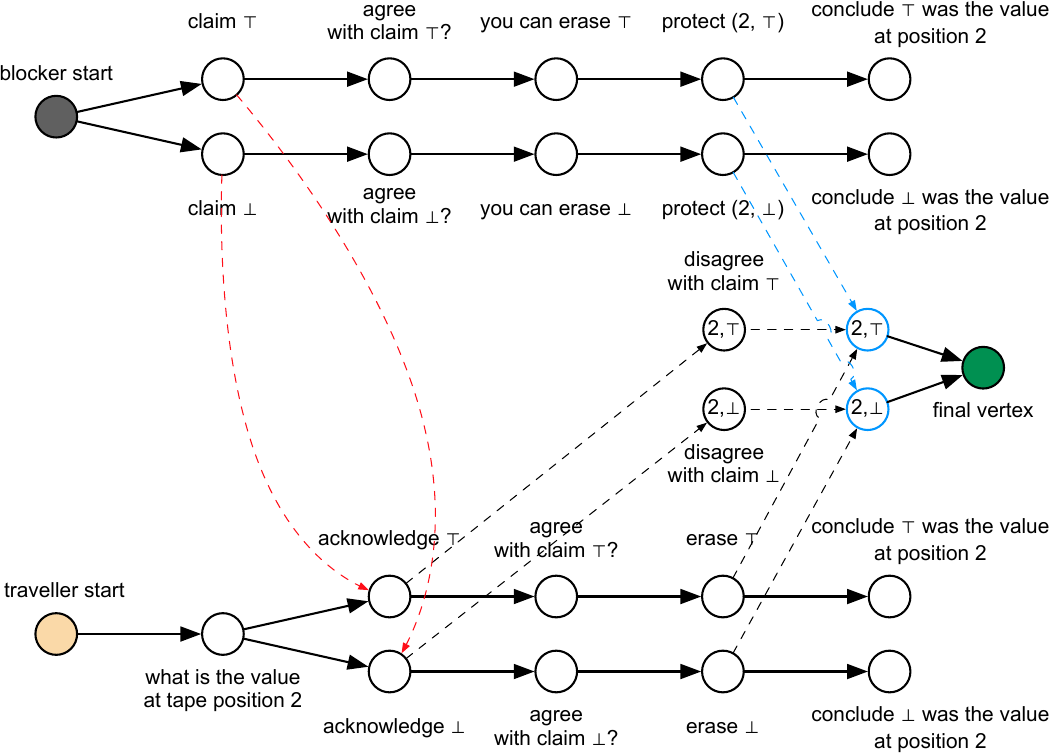}
    \caption{When the current state of the ATM  is $q$, the depicted tape eraser gadget $E(q,2)$ enables the traveler to check whether the vertex corresponding to the current value of the tape position~2 is marked.}
    \label{fig:eraser_gadget}
\end{figure*}

The eraser gadget erases the value at tape position 2 when the current state of the ATM is $q$ and is denoted $E(q,2)$ (\cref{fig:eraser_gadget}). 
The final vertex and the vertices $2,\top$ and $2,\bot$ are common to the input gadget $I(w)$. Precisely, the gadget does the following.

\begin{enumerate}[leftmargin=*]
    \item In the gadget $E(q,2)$, the traveler and saboteur are initially located in the marked start vertices.
    \item  The traveler asks the saboteur to reveal the current value at the tape position~2. The saboteur has one of two choices: claim the value is $\top$ or claim $\bot$ by moving to either the vertex claim~$\top$ or claim~$\bot$ (resp.).  The traveler has to correctly acknowledge the claim by moving appropriately to the vertex with the label acknowledge~$\top$ or acknowledge~$\bot$ (resp.). If not, the saboteur can catch the traveler in the next step by using the red dashed edge. 
    \item Suppose the current value of the tape position 2 is $\top$, then the traveler suspects the blue vertex $2,\top$ (see invariant). One of the following scenarios must have occurred in the previous step.
    \begin{enumerate}
        \item \emph{The saboteur correctly claims $\top$}: Since the traveler suspects the blue vertex $2,\top$, the traveler must not move to the vertex labeled disagree with claim $\top$.
        \item \emph{The saboteur incorrectly claims $\bot$}: The traveler can win the game by moving to the vertex that disagrees with the claim $\bot$ and eventually to the final vertex.
    \end{enumerate}
    \item  Next, the saboteur and traveler do not have a choice for the subsequent vertices they must move to. Continuing with the case where the value of the tape position~2 is $\top$, the saboteur and traveler move to the vertices with labels \emph{you can erase~$\top$} and \emph{erase the value~$\top$} (resp.). Now, the traveler can check whether the saboteur has placed a trap at $2,\top$. However, the traveler cannot move to the tape vertex $2,\top$ in the next step as the saboteur would catch the traveler in the next move.
\end{enumerate}
\begin{figure*}[!t]
    \centering
    \subfloat[$\forall$-gadget\label{fig:forall}]{
        \includegraphics[width=.8\linewidth]{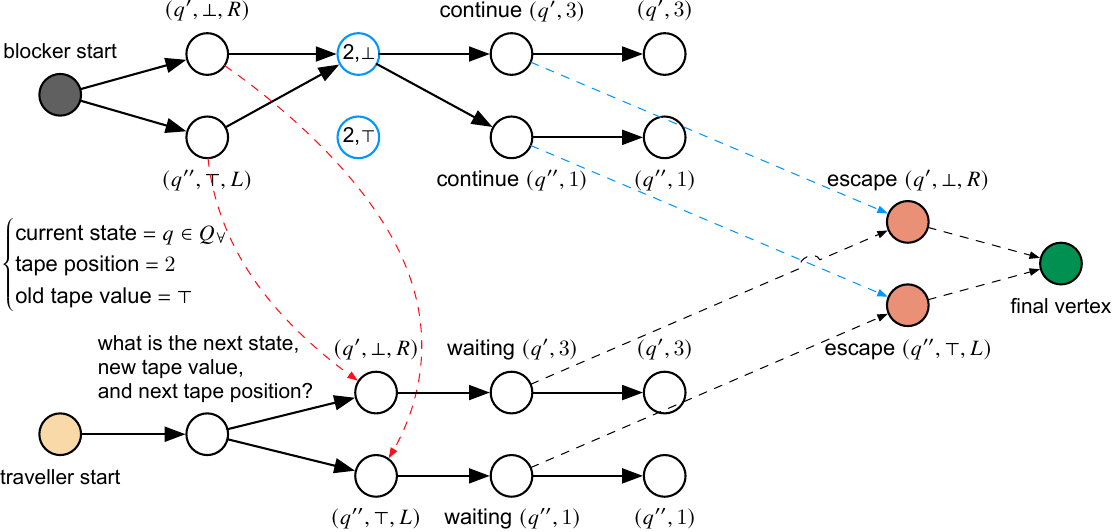}
    }    \\
    \subfloat[$\exists$-gadget\label{fig:exists_gad}]{
        \includegraphics[width=.8\linewidth]{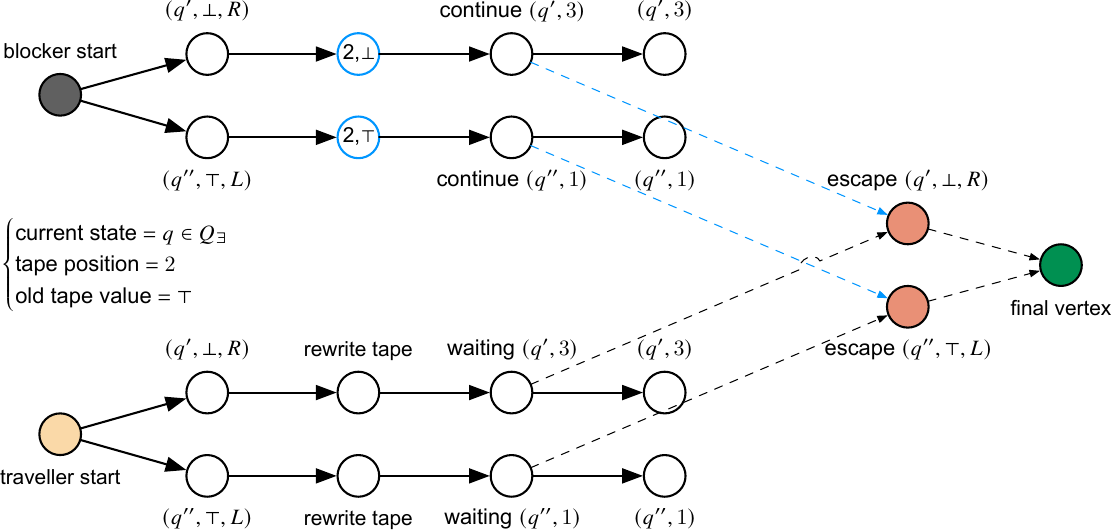}
    }
    \caption{(a) The gadget $\forall(q,2,\top)$ models the transition when the current state is $q$, current tape position is $2$, and the value of the current tape position is $\top$. (b) The gadget $\exists(q,2,\top)$ models the transition when the current state is $q$, the current tape position is $2$, and the value of the current tape position is $\top$.}
\end{figure*}

\begin{proposition}
\label{prop:eraser_gad}
Suppose the current tape position is $p$; when the traveler and the saboteur reach the conclude vertices in the eraser gadget $E(q,p)$, then the traveler knows whether the saboteur has marked the tape vertices $p,\top$ and $p,\bot$.
\end{proposition}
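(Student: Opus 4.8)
The plan is to prove the proposition as a local correctness lemma for the gadget $E(q,p)$, by a case analysis on the saboteur's play organized around the \textbf{Invariant} established above. First I would fix notation: without loss of generality assume the current value at position $p$ is $\top$, so that upon entering $E(q,p)$ the traveler suspects the tape vertex $p,\top$ and, crucially, already knows by the \textbf{Invariant} (whose base case is Proposition~\ref{prop:inp_gad}) that $p,\bot$ is unmarked; the case $\bot$ is handled symmetrically by swapping the roles of $p,\top$ and $p,\bot$. Since the statement is conditioned on both players reaching the conclude vertices, the proof only needs to certify two things: that every play reaching conclude is one in which the saboteur truthfully revealed the value, and that along every such play the traveler ends adjacent to the relevant tape vertex and hence observes its marking status.

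For the \emph{truthful-revelation} step I would analyze the saboteur's claim (her move to claim~$\top$ or claim~$\bot$). A claim of $\bot$ is losing for the saboteur: because the traveler knows from the \textbf{Invariant} that the true value is $\top$ and that $p,\bot$ is unmarked, she takes the \emph{disagree with claim}~$\bot$ branch, which the gadget routes through unmarked vertices to the final vertex (scenario~(b) of the gadget walkthrough), so the traveler wins and conclude is never reached. Hence on any play reaching conclude the saboteur must have claimed $\top$; the acknowledge edges then force the traveler to respond acknowledge~$\top$ and, by scenario~(a), to avoid \emph{disagree with claim}~$\top$, since that branch would send her toward the still-suspected region where the saboteur's catch edge (the red dashed edge) applies.

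For the \emph{observation} step I would use that, after the truthful acknowledgment, the gadget offers no further choices: both tokens are driven deterministically to the \emph{you can erase} and \emph{erase the value} vertices, at which point the traveler is adjacent to $p,\top$. By the observation function fixed for this reduction — the traveler sees a marked vertex exactly when adjacent to it — the traveler now observes whether $p,\top$ is marked, without ever stepping onto it (stepping on would let the saboteur catch her on the next move). Combining the freshly observed status of $p,\top$ with the \textbf{Invariant}-guaranteed knowledge that $p,\bot$ is unmarked, the traveler knows the marking status of both tape vertices at the conclude vertices, which is exactly the assertion of the proposition.

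I expect the main obstacle to be the first step: rigorously certifying that the gadget edges force truthful revelation. This requires tracking, at every half-round, which vertices are suspect versus known-unmarked for the traveler, verifying that each dishonest saboteur move is genuinely punished by a traveler win while each honest branch keeps the traveler off suspect vertices, and — most delicately — invoking the budget constraint $m=1$ to argue that the saboteur cannot simultaneously arm the catch trap on the red dashed edge and threaten or re-mark a tape vertex (so that, in particular, $p,\bot$ cannot be tampered with inside the gadget, preserving the \textbf{Invariant} knowledge I rely on). Carrying this bookkeeping through symmetrically for the $\top$ and $\bot$ cases is the technically fussy part; the remainder then follows immediately from the determinacy of the forced moves and the definition of the observation function.
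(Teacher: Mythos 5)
Your proposal is correct and follows essentially the same route as the paper, whose justification for Proposition~\ref{prop:eraser_gad} is the gadget walkthrough immediately preceding it: the claim/acknowledge edges force truthful revelation (a false claim lets the traveler escape via the \emph{disagree} branch), the subsequent forced moves place the traveler adjacent to the current-value tape vertex so she observes its marking status, and the status of the other tape vertex is already known from the invariant. Your added bookkeeping on the budget $m=1$ and the suspect/known-unmarked sets is a reasonable tightening of the same argument rather than a different approach.
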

Following the erasure of the head of the tape, the actual transition occurs. 
We accomplish the simulation of this ATM action in the game with the help of two different types of gadgets. 
The type of gadget used depends on whether the state of the ATM is a $\forall$ state or a $\exists$ state. The subtlety is that we have a separate gadget for every transition for each tape position.
We present the $\forall$-gadget for the transition $\delta_\forall (q,\forall) = \left\{(q',\bot,R),(q''\bot,L)\right\}$ when the current tape position is $2$ (\cref{fig:forall}). 
\begin{enumerate}[leftmargin=*]
    \item The traveler and saboteur start in the marked start vertices. Following this, the traveler asks the saboteur to choose the next state, the next tape position, and the value for the current tape position. 
    \item The saboteur uses the red edges to ensure the traveler acknowledges this choice correctly by moving only to the corresponding vertex.
    \item The saboteur moves to the vertex (blue) corresponding to the correct tape value (for the current tape position). In the gadget presented in the figure, the saboteur moves to the blue vertex with the label $2,\bot$.
    \item The traveler moves to the corresponding wait vertex, that is, either the vertex labeled \emph{waiting}~$(q',3)$ or \emph{waiting}~$(q'',1)$. Following this move, the saboteur must move to a continuous vertex in the same gadget to prevent the traveler from using the escape vertices.
\end{enumerate}

\begin{remark}
The orange escape vertices are unique to the respective gadgets.
\end{remark}
The saboteur has to return to the vertices above in this gadget from the tape vertices to ensure that the traveler cannot use the escape vertex (in this gadget) to move to the final vertex.

\begin{proposition}
\label{prop:forall_gad}
Suppose the current tape position, state, the tape value is $p$, $q \in Q_\forall$, and $\top$ (resp.). If the traveler and the saboteur simulate the transition $\delta_\forall(q,\top)$ using the gadget $\forall(q,p,\top)$, then the traveler suspects the vertices corresponding to the current tape configuration. 
\end{proposition}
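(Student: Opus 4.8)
The plan is to prove the proposition as the inductive step that maintains the boxed \textbf{Invariant} across one simulated $\forall$-transition. As induction hypothesis I would assume that immediately before entering the gadget the traveler suspects exactly the tape vertices of the current configuration and knows every other tape vertex to be unmarked. The goal is then to show that after the players traverse $\forall(q,p,\top)$ the suspect set is updated to the tape vertices of the successor configuration selected by the saboteur. The argument decomposes into three parts: clearing position $p$, re-establishing a single new suspect at the rewritten cell, and verifying that the escape/catch gadgetry forces both players along the intended path so that no other tape vertex changes status.

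First I would invoke \cref{prop:eraser_gad}: since the transition is simulated only after the eraser gadget $E(q,p)$ has been played, the traveler already knows the true marked-status of both $p,\top$ and $p,\bot$, so position $p$ is no longer suspect as the play continues. Every tape vertex at a position other than $p$ remains suspect for the reason given after \cref{prop:inp_gad}: such a vertex was visited by the saboteur but lies at distance greater than one from any white (traveler-reachable) vertex, so it never enters the traveler's observation zone and its status is unchanged by the moves inside $\forall(q,p,\top)$. Thus, before the saboteur's rewriting move, the suspect tape vertices are exactly those of the current configuration restricted to positions different from $p$.

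Next I would trace the moves of the gadget (\cref{fig:forall}). Acting as the $\forall$-player, the saboteur commits to an element of $\delta_\forall(q,\top)$---a next state, a head shift, and a new value $val$ for cell $p$---and the traveler must acknowledge this choice by moving to the matching vertex; the red edges let the saboteur catch a traveler who acknowledges incorrectly, while the orange escape vertices let the traveler win if the saboteur fails to commit. The saboteur is then forced to move through the blue vertex $p,val$ before returning, and because that vertex again lies at distance greater than one from every white vertex, the traveler cannot observe the visit, so $p,val$ becomes suspect. Crucially, the saboteur must return into the gadget rather than linger near the tape vertices, which denies the traveler the escape vertex---this is precisely the forcing described for the gadget. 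Combining the three parts, the suspect tape vertices are now those of the current configuration at positions other than $p$ together with $p,val$, which is exactly the successor configuration, and the traveler still knows every remaining tape vertex to be unmarked, re-establishing the Invariant.

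I expect the main obstacle to be the game-theoretic forcing rather than the bookkeeping: I must check at each branch point of \cref{fig:forall} that neither player has a profitable deviation, so the play is essentially determined and the suspect set comes out exactly right. Two points need particular care. First, with budget $m=1$ only a single tape vertex is ever actually marked, yet the traveler must treat the entire configuration as suspect; the argument relies on the one deletion being freed at position $p$ by the eraser and reused to mark $p,val$, so I would verify that $|T|\le 1$ is never violated and never pushes the saboteur off the intended path. Second, I would confirm that the traveler learns nothing inside the gadget about \emph{which} suspect vertex carries the genuine mark, since otherwise the simulated $\forall$-choice could leak and break the correspondence with the ATM transition.
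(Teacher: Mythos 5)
Your main argument follows the same route the paper takes: \cref{prop:eraser_gad} clears position $p$, the red catch edges and orange escape vertices force both players along the intended path through $\forall(q,p,\top)$, the saboteur's forced visit to the blue vertex $p,val$ (at distance greater than one from any traveler vertex, hence unobservable) makes it suspect, and the other tape positions are untouched --- this is exactly the paper's enumerated description of the gadget and scenario~3 of its proof of the invariant, so the core of the proposal is sound.

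One caveat in your final paragraph rests on a misreading of the game's rules. You write that the argument relies on ``the one deletion being freed at position $p$ by the eraser and reused to mark $p,val$.'' Marks are monotone in this game ($T^i \subseteq T^{i+1}$ and $|T^i| \le m$ counts total marks over the whole play), so a deletion is never freed or reused. The paper's actual resolution is different and simpler: in equilibrium the saboteur \emph{never} marks any vertex at all, because a mark at the current head position would be detected via the dashed edges from the \emph{erase the value} vertices in the eraser gadget, after which the traveler knows the budget $m=1$ is exhausted and can safely use the \emph{disagree with claim} vertices in the next eraser gadget to reach the final vertex. The suspect set that the proposition tracks is purely a record of unobserved saboteur visits, independent of whether any mark was actually placed, so your bookkeeping of it is correct; but you should drop the ``freed and reused'' framing, since verifying $|T|\le 1$ under that model would lead you astray.
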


We present the $\exists$-gadget for transition $ \delta_\exists (q,\forall) = \left\{(q',\bot,R),(q''\top,L)\right\}$ when the current position is $2$ (\cref{fig:exists_gad}). %
\begin{enumerate}[leftmargin=*]
    \item The traveler begins by choosing the next state, the new tape value, and the tape position. The traveler has two choices in the gadget presented, which are determined by the transition function.
    \item Following the traveler's move, the saboteur can either follow the traveler and make the same or the opposite choice, meaning that depending on the structure of the transition function, in the example, the saboteur only has two choices. 
    \item The following move of the traveler corresponds to waiting for the saboteur to set the new value of the tape before moving the tape head to the new position. 
    \item To prevent the traveler from using an escape vertex to reach the final vertex, the saboteur in the previous step must have moved to the correct tape vertex (in turn, make the same choice for the transition as the traveler). 
\end{enumerate}

\begin{proposition}
\label{prop:exists_gad}
Suppose the current tape position, state, the tape value is $p$, $q \in Q_\exists$, and $\top$ (resp.). If the traveler and the saboteur simulate the transition $\delta_\exists(q,\top)$ using the gadget $\exists(q,p,\top)$, then the traveler suspects the vertices corresponding to the current tape configuration. 
\end{proposition}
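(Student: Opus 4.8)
The plan is to prove \cref{prop:exists_gad} in exact parallel with \cref{prop:forall_gad}, exchanging the roles of chooser and follower: in the $\exists$-gadget the \emph{traveler} selects the transition and the \emph{saboteur} is forced to follow it. I would begin at the configuration reached immediately after the eraser gadget $E(q,p)$. By \cref{prop:eraser_gad}, at that point the traveler knows definitively that both tape vertices $p,\top$ and $p,\bot$ are unmarked, and by the standing invariant the traveler suspects precisely the configuration vertices at every position other than $p$, while knowing every remaining tape vertex to be unmarked. The goal is then to show that after traversing $\exists(q,p,\top)$ the suspect set equals the new configuration exactly: the old suspects at positions $\neq p$ are untouched, and the single new value vertex $p,val$ determined by the chosen transition becomes suspect.

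I would next trace the gadget move by move (\cref{fig:exists_gad}). The traveler moves first and commits to one of the transitions in $\delta_\exists(q,\top)$, thereby fixing the target $(q',val,z)$ and hence the new tape position. The crux is a \emph{forcing lemma}: against this choice the saboteur must move to the blue tape vertex $p,val$ matching the traveler's selection, for otherwise the traveler reaches an escape vertex and wins on the following round. I would establish this with the same timing argument used for the $\forall$-gadget: the traveler's move to the \emph{waiting} vertex leaves the saboteur exactly one round to occupy the continuation vertex guarding the orange escape vertex; if the saboteur instead deviates (sets the opposite tape value, or lingers), the escape vertex becomes unguarded and the traveler slips to the final vertex. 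Conversely, when the saboteur mirrors the choice correctly and returns to the guarding vertex, the traveler cannot reach the escape vertex without being caught by the dashed red edge, so the simulation must continue.

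Given the forcing lemma the conclusion is immediate. To avoid losing, the saboteur passes through $p,val$; since every tape vertex lies at distance greater than one from the white and continuation vertices the traveler occupies, the traveler never observes $p,val$, so it enters the suspect set. No configuration vertex at a position $\neq p$ is visited within this gadget, so those suspicions persist unchanged, and the other tape vertex at position $p$ remains known-unmarked because \cref{prop:eraser_gad} certified it and the saboteur did not revisit it. Hence after the gadget the traveler suspects exactly the vertices of the new tape configuration and knows all other tape vertices to be unmarked, which is precisely the invariant.

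The main obstacle is the forcing lemma, namely the two-directional case analysis on the escape structure: I must verify both that \emph{every} saboteur deviation exposes an escape (so the traveler can realize \emph{any} existential branch it selects, giving completeness of the $\exists$-choice) and that \emph{no} spurious escape exists when the saboteur cooperates (soundness, preventing the traveler from winning without a genuine accepting branch). This is delicate because it hinges on the precise adjacency and round-parity between the waiting, continuation, escape, and tape vertices, and on the dashed red catch-edges. The remark that the orange escape vertices are unique to each gadget is what localizes the analysis and keeps escape routes from leaking between consecutive transition gadgets.
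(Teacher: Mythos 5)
Your proposal is correct and follows essentially the same route as the paper, which never gives a standalone proof of \cref{prop:exists_gad} but instead embeds exactly your argument in the gadget walkthrough and in the proof of the invariant: the traveler commits to the $\delta_\exists$ branch, any saboteur deviation exposes an orange escape vertex, so the saboteur is forced to mirror the choice, pass through the new tape-value vertex unobserved (making it suspect), and return to guard the escape. Your only deviation is cosmetic --- you isolate the forcing step as an explicit lemma and read \cref{prop:eraser_gad} as certifying the $p$-vertices unmarked rather than merely known, which is harmless given the paper's separate argument that the saboteur can never profitably spend her budget.
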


\paragraph{Stage 3 [Simulate the acceptance or rejection of a string~$w$].}
The game $G_{M,w}$ does not have the eraser, $\forall$, or $\exists$ gadgets for the accept and reject states. Because once the ATM reaches one of these states, the string is either accepted or rejected.
We present the \emph{accept gadget} when the current state and tape position is $q_\mathsf{accept}$ and $2$ (\cref{fig:accept_reject}~(a)).

Initially, the traveler and saboteur begin in marked vertices. The traveler reaches the final (accepting) state and wins the game in one move. 
Consequently, the string $w$ is accepted.
We present the \emph{reject gadget} for the case when the current tape position is $2$ (\cref{fig:accept_reject}~(b)). In the gadget, when the traveler reaches the reject state, there is no way to come out of this state. In the next move, the saboteur wins the game and consequently rejects the string $w.$

We now play the entire game $G_{\mathcal{M},w}$ for a specific ATM of the following form $\mathcal{M} = \langle Q,T,\delta,q_0,q_\mathsf{reject},q_\mathsf{accept}\rangle$, where 
\begin{itemize}[leftmargin=*]
    \item $Q = \left\{q_0,q_1,q_\mathsf{accept},q_\mathsf{reject}\right\}$, $Q_{\exists} = \left\{q_0,q_\mathsf{accept},q_\mathsf{reject}\right\}$ and $Q_{\forall} = \left\{q_1\right\}.$
    \item $T = \{\top,\bot\}.$
    \item The transition function $\delta$ is defined as follows
    \begin{itemize}
        \item $\delta(q_0,\top) = \left\{(q_0,\top,R)
        \right\}$,
        \item $\delta(q_0,\bot) = \left\{(q_\mathsf{accept},\bot,R)\right\}$,
        \item $\delta(q_1,\top) = \left\{(q_0,\top,R),(q_1,\bot,L) 
        \right\}$, and
        \item $\delta(q_1,\bot) = \left\{(q_1,\bot,L),(q_\mathsf{reject},\top,R)\right\}$
    \end{itemize}
\end{itemize}
 for the specific input $w = \top \cdot \bot \cdot \bot$ (\cref{fig:overall}).

The players first read the input by moving along the black edges of the input gadget. The players then simulate the ATM in position one, and the current value of this position is $\top$. The players together erase this value at the tape position one and simulate the $\delta_\exists$ transition ($q_0 \in Q_\exists$) and write the new value for the tape using the orange edges. There is only one choice for the traveler ($\exists$-player), the current state is now $q_1$, and the tape position is two. Again, the players erase the value at position two and simulate the $\forall$-transition using the gadget $\forall(q_1,2,\bot)$ through the violet edges. Lastly, the string is directly rejected by moving through the blue link to the reject gadget.

Regarding the simulation of an arbitrary ATM $M$ for a given input $w$,  even though the saboteur has a budget of one $(m=1)$, she can never mark (delete) a vertex. If the saboteur marks a vertex, then the traveler can find the marked vertex using the dashed edges starting from \emph{erase the value} vertices in the eraser gadgets (\cref{fig:eraser_gadget}). The next time the saboteur and the traveler visit a different eraser gadget, the traveler can use the \emph{disagree with claim} vertices to reach the final vertex as the traveler knows that the saboteur has no budget left.

\begin{proof}[Proof of the invariant]
After the traveler and saboteur finish simulating an ATM action, the following three scenarios must have occurred before the previous move.
\begin{enumerate}[leftmargin=*]
    \item The saboteur and the traveler played in the input gadget $I(w)$: In this scenario, by Proposition~\ref{prop:inp_gad}, the traveler suspects the vertices corresponding to the input.
    \item The saboteur and the traveler played in a $\exists$ gadget: In this scenario (\cref{fig:exists_gad}), the onus is on the traveler to first choose the $\delta_\exists$ transition. If the saboteur chooses a different choice than the traveler's, the traveler can use an orange vertex to reach the final vertex. Thus, the saboteur is forced to make the same choice. For the same reason, after the saboteur moves to a new tape value vertex, the saboteur must come back to the current $\exists$ gadget.
    
    \item The saboteur and the traveler played in a $\forall$ gadget: In this scenario (\cref{fig:forall}), the onus is on the saboteur to first choose the $\delta_\forall$ transition. If the traveler chooses a different choice other than the choice made by the saboteur, the saboteur can capture the traveler in the following move using a red-colored edge.  In the gadget, when the saboteur visits a vertex with a new value on the tape, the traveler can use the orange vertices to escape to the final vertex, ensuring that the saboteur comes back to the same gadget. %
\end{enumerate}

\begin{figure}[!t]
    \centering
    \subfloat[The accept gadget $A(q_\mathsf{accept}, 2)$ when the current state is $q_\mathsf{accept}$ and tape position is 2.\label{subfig:accept}]{
        \includegraphics[width=.336\textwidth]{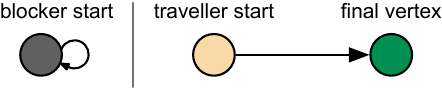}
    }\hfill %
    \subfloat[The reject gadget $R(q_\mathsf{reject},2)$ when the current state is $q_\mathsf{reject}$ and tape position is~2.\label{subfig:reject}]{
        \includegraphics[width=.264\textwidth]{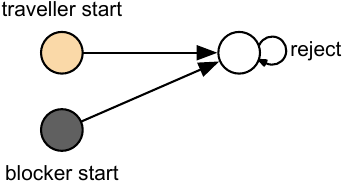}
    }
    \caption{The final step requires accept and reject gadgets.}
    \label{fig:accept_reject}
\end{figure}

\begin{figure*}[!t]
    \centering
     \includegraphics[width=.8\linewidth]{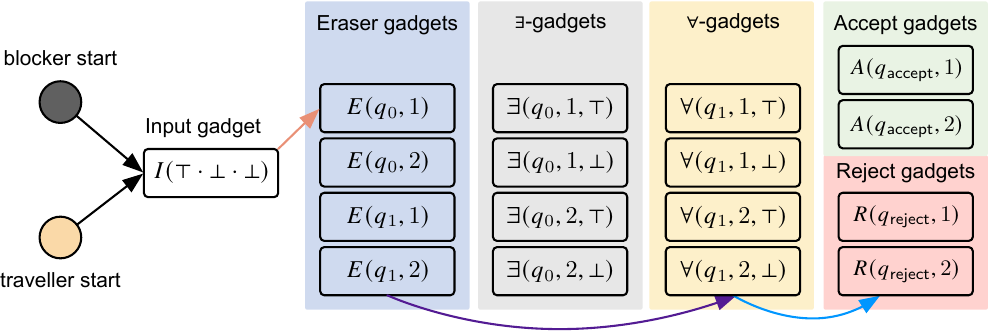}
    \caption{The game $G_{\mathcal{M},w}$ for the ATM $\mathcal{M}$ and input string $w = \top \cdot \bot \cdot \bot$. We show the sequence of gadgets in the simulation of this string but only link the sequence of gadgets used in the simulation.   The saboteur rejects the string by using the blue edge to move to the reject gadget $R(q_\mathsf{reject},2)$ instead of moving to $E(q_1,1)$.}
    \label{fig:overall}
\end{figure*}

We induct the sequence of the gadgets used in the ATM. In the base case, the saboteur and the traveler play in the input gadget $I(w)$. We have already proved this case. Let us assume inductively that the invariant is true for the first $p \in \mathbb{N}$ transitions. Let $q^p, H^p$, and $v^p$ be the current state, the current tape position, and the value of the tap position after $p$ transitions. Let $q^p \in Q_\exists$. 
In the sabotage game, the traveler and the saboteur have to erase the head of the current tape using the eraser gadget $E(q^p,p)$. As a result of proposition~\ref{prop:eraser_gad}, the traveler knows if the tape vertex $H^p,v^p$ is marked. However, as a consequence of scenarios 2 and 3 discussed above, the traveler suspects the vertex corresponding to the new value stored at $H^p$ after the end of the transition.
\end{proof}
The invariant, in turn, implies the following theorem.

\begin{theorem}
The traveler has a winning strategy for the sabotage game on $G_{M,w}$ if and only if the ATM $M$ accepts the input~$w$.
\end{theorem}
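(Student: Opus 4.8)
The plan is to prove the biconditional by establishing a tight correspondence between plays of $G_{M,w}$ that follow the gadget protocol and branches of the computation tree of $M$ on $w$, with the game invariant serving as the bridge. In both directions I would argue by strategy transfer: the traveler mimics the $\exists$-player, and the saboteur is forced to mimic the $\forall$-player. The already-proven invariant does most of the work, since it guarantees that at every stage the traveler's suspect set is exactly the set of vertices encoding the current tape configuration, so the projection from game configurations to ATM configurations is faithful.

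For the direction ``$M$ accepts $w$ implies the traveler wins,'' I would fix an accepting computation tree $\mathcal{T}$ and define a traveler strategy that, along any play, keeps the game configuration in correspondence with a node of $\mathcal{T}$ whose subtree is accepting. At an $\exists$-gadget, since the state is existential and the current node is accepting, some successor subtree accepts, and the traveler selects the transition to that successor; Proposition~\ref{prop:exists_gad} guarantees the configuration updates correctly and the invariant persists. At a $\forall$-gadget the saboteur picks the transition, but since the state is universal and accepting, \emph{every} successor subtree accepts, so whichever branch the saboteur forces remains accepting, and Proposition~\ref{prop:forall_gad} again preserves the invariant. Because the invariant tells the traveler which tape vertices are genuinely unmarked and the eraser gadgets let her verify claimed values (Proposition~\ref{prop:eraser_gad}), the traveler never steps onto a marked vertex. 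Since $\mathcal{T}$ is finite and accepting, this descent reaches $q_\mathsf{accept}$, where the accept gadget lets the traveler reach the final vertex in one move.

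For the converse I would extract an accepting tree from a traveler winning strategy $\pi$ (equivalently, argue the contrapositive through a saboteur strategy when $M$ rejects). Any play consistent with the gadget protocol projects to a sequence of ATM configurations; I would show that $\pi$'s choices at $\exists$-gadgets define existential successor selections, while ranging over \emph{all} saboteur responses at $\forall$-gadgets covers all universal successors. Because $\pi$ wins against every saboteur, every branch so generated must reach the accept gadget---the only route to $F$ when the saboteur does not deviate---and these branches assemble into an accepting computation tree of $M$ on $w$, with the invariant certifying that the projected configurations are the true tape contents.

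The main obstacle is ruling out \emph{beneficial} saboteur deviations from honest simulation, which is precisely what makes the correspondence exact. Three deviations must be handled: lying about a tape value, actually spending the budget to delete a vertex, and abandoning a gadget before returning. I would show each is punished---the input and eraser gadgets expose false claims (Propositions~\ref{prop:inp_gad},~\ref{prop:eraser_gad}); if the saboteur ever deletes a vertex, the dashed edges from the \emph{erase the value} vertices let the traveler detect it, after which a \emph{disagree with claim} vertex at the next eraser gadget routes the traveler to $F$ since the budget is exhausted; and the orange escape vertices, unique to each gadget, force the saboteur to return before the traveler can exploit them. Establishing that these mechanisms leave the saboteur no profitable alternative to faithfully simulating the $\forall$-player---so that the game's alternation exactly realizes the ATM's alternation---is the crux, and everything else follows from the invariant.
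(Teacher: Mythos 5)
Your proposal is correct and follows essentially the same route as the paper: the paper establishes the gadget propositions and the invariant, handles the saboteur's possible deviations (false claims, spending the budget, abandoning a gadget) exactly as you do, and then simply asserts that ``the invariant, in turn, implies the theorem.'' Your write-up actually supplies the strategy-transfer argument (accepting computation tree $\leftrightarrow$ winning traveler strategy) that the paper leaves implicit, so it is a more explicit rendering of the same proof rather than a different one.
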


The following holds as a consequence of the above result, the fact that $\mathrm{APSPACE} = \mathrm{EXPTIME}$~\cite{Chandra1981Jan} and Theorem~\ref{theorem:inEXPTIME}.

\begin{corollary}
The problem of finding a winning strategy for the traveler (if one exists) is $\mathrm{EXPTIME}$-complete.
\label{thm:reach_exptime}
\end{corollary}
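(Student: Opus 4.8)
The plan is to assemble the corollary from the two halves of a completeness argument: an upper bound giving containment in EXPTIME and a matching lower bound giving EXPTIME-hardness. For the upper bound I would invoke Theorem~\ref{theorem:inEXPTIME} verbatim: it already shows that whenever the traveler has a winning strategy, that strategy is computable in time $\mathcal{O}(4^n n^{2m})$, which is exponential in the size of the game; in particular the decision version, ``does the traveler have a winning strategy?'', lies in EXPTIME, and any found strategy can be returned within the same bound.

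For the lower bound I would fix a single alternating Turing machine $M$ whose language is EXPTIME-complete, relying on the identity $\mathrm{APSPACE} = \mathrm{EXPTIME}$~\cite{Chandra1981Jan} to guarantee that such an $M$ running in polynomial space $P(n)$ exists. The map $w \mapsto G_{M,w}$ from the preceding construction is then a many-one reduction, and by the theorem just proved the traveler wins $G_{M,w}$ exactly when $M$ accepts $w$. The remaining obligation---and the crux of the argument---is to verify that this reduction is computable in time polynomial in $|w|$, so that it is a genuine polynomial-time many-one reduction witnessing hardness.

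I would discharge the polynomiality check by counting gadgets. Since $M$ is fixed, its state set and transition function have constant size; since $M$ uses space at most $P(|w|)$ for a fixed polynomial $P$, the simulated tape has at most $P(|w|)$ cells. The game $G_{M,w}$ consists of one input gadget $I(w)$ together with, for each tape position $1 \le p \le P(|w|)$ and each relevant state/transition, the associated eraser gadget $E(q,p)$, the $\forall$- and $\exists$-gadgets, and the accept and reject gadgets; each such gadget has a constant number of vertices and edges, and the wiring between them is dictated mechanically by $\delta$. Hence the number of gadgets, and therefore $|G_{M,w}|$, is polynomial in $|w|$, and the whole graph is emitted in polynomial time. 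This establishes that $w \mapsto G_{M,w}$ is a polynomial-time many-one reduction, so the sabotage-game problem is EXPTIME-hard; combining with the EXPTIME membership from Theorem~\ref{theorem:inEXPTIME} yields EXPTIME-completeness.

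The main obstacle I anticipate is not correctness of the reduction---that is already carried by the preceding theorem and its supporting Propositions~\ref{prop:inp_gad}--\ref{prop:exists_gad}---but the uniformity and size bookkeeping: one must argue cleanly that the family of distinct gadgets, indexed by state, tape position, and transition, stays polynomial in number, since an inadvertent blowup in any index (for instance, a gadget whose size depends on the exponentially many reachable configurations rather than on the polynomial tape length) would silently break the hardness claim.
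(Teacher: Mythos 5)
Your proposal is correct and follows essentially the same route as the paper: membership in EXPTIME via Theorem~\ref{theorem:inEXPTIME}, and hardness via the reduction $w \mapsto G_{M,w}$ together with $\mathrm{APSPACE} = \mathrm{EXPTIME}$. The only difference is that you make explicit the polynomial-size/polynomial-time bookkeeping for the gadget construction, which the paper leaves implicit.
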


\subsection{Saboteurs perspective}

To solve a sabotage game $G$ from the saboteur's perspective, we assume that the budget $m$ is fixed and $Obs(v) = V$ for all $v \in V$, that is, in the worst case from the \emph{perspective of the saboteur} the traveler can correctly guess the location of marked vertices.

We encode the sabotage game into a B\"uchi game $\widetilde{G}$.
In addition to the current positions of the traveler and the saboteur, we also encode into the state space of the new arena $\widetilde{G}$ the set of marked locations. These locations correspond to the set of locations removed by the saboteur but are visible to the traveler. This reduction is the standard subset construction~\cite{doyen:2011}. %
A winning strategy for the saboteur in the reachability game on $\widetilde{G}$ is a strategy for the sabotage game. 

\begin{theorem}
If the saboteur has a winning strategy for a sabotage game, it can be found in exponential in $m$ but polynomial in $n$.
\end{theorem}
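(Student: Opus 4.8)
The plan is to make the subset construction sketched above precise and then read the complexity off the size of the resulting finite arena. Because we assume $Obs(v)=V$ for every $v$, the traveler observes every marked vertex, so both players share the same information and $\widetilde{G}$ is a game of \emph{perfect} information. First I would define the arena $\widetilde{G}=(\widetilde{V},\widetilde{E})$ whose vertices are the configurations of $G$ augmented with a turn bit, $\widetilde{V}=V\times V\times \mathcal{B}\times\{t,b\}$, where $\mathcal{B}=\{T\subseteq V : |T|\le m\}$ collects the admissible mark-sets; the $t$-vertices belong to the traveler and the $b$-vertices to the saboteur. The edge relation would encode the rules of a play verbatim: from $(u,v,T,t)$ the traveler moves to $(u',v,T,b)$ whenever $(u,u')\in E_T$; following the construction used for \cref{theorem:inEXPTIME}, a marked vertex $u\in T$ becomes an absorbing self-loop that never lies in the target set, capturing the rule that the traveler loses on a marked vertex; from $(u,v,T,b)$ the saboteur moves along $E_B$ to $(u,v',T',t)$ with $T\subseteq T'\subseteq T\cup\{v,v'\}$ and $|T'|\le m$, mirroring clauses (2)--(4) of the definition of a play. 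The target set for the traveler is $\widetilde{F}=\{(u,v,T,t):u\in F\}$.

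Next I would bound the size of $\widetilde{G}$. The number of admissible mark-sets is $|\mathcal{B}|=\sum_{i=0}^{m}\binom{n}{i}=O(n^{m})$, so $|\widetilde{V}|=O(n^{m+2})$ and $|\widetilde{E}|=O(|\widetilde{V}|^{2})$, where $n=|V|$. The point is that $n^{m}$ is polynomial in $n$ for fixed $m$ yet exponential in $m$ for fixed $n$, which is exactly the claimed dependence. For the traveler's reachability objective the saboteur plays the dual safety game on $\widetilde{G}$, solvable in time $O(|\widetilde{V}|+|\widetilde{E}|)$; for the B\"uchi objective the saboteur plays the complementary co-B\"uchi game, solved together with its B\"uchi dual in time $O(|\widetilde{V}|\cdot|\widetilde{E}|)$~\cite{Emerson1991Oct,Mostowski1991}. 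Either way the running time is polynomial in $|\widetilde{V}|$, hence polynomial in $n$ and exponential in $m$; and since these games are memoryless-determined, the strategy returned is a finite-memory strategy for the saboteur in $G$.

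The step I expect to be the crux is correctness, namely that a positional winning strategy for the saboteur on $\widetilde{G}$ induces a winning strategy in the original sabotage game and conversely. The forward direction is routine: a play of $\widetilde{G}$ projects to a play of $G$ by deleting the turn bit, and the edge relation admits exactly the legal moves, so a saboteur strategy keeping the token out of $\widetilde{F}$ (reachability) or visiting it only finitely often (B\"uchi) denies the traveler its objective in $G$. The delicate direction is that making the mark-set $T$ explicit in the state loses no information: because $Obs(v)=V$, the set of marked vertices is common knowledge after every move, so an optimal saboteur strategy in $G$ need depend on nothing beyond the current configuration $(t,b,T)$ -- which is precisely what a positional strategy on $\widetilde{G}$ supplies. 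I would make this rigorous by exhibiting a bijection between histories of $G$ and paths of $\widetilde{G}$ that respects both players' agreement relations and transports winning plays across it; the only points requiring care are the absorbing losing loops and the budget bound, which must be checked to coincide with clauses (3)--(4) of the play definition.
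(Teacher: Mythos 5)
Your proposal is correct and follows essentially the same route as the paper: the standard subset/product construction that records both players' positions and the (at most $m$-element) mark-set in the state, yielding an arena of size $\mathcal{O}(n^{m+2})$, which is then solved as a perfect-information reachability/B\"uchi game in time polynomial in the arena size. You spell out the edge relation, the duality (saboteur as safety/co-B\"uchi player), and the correctness argument in more detail than the paper's terse proof, but the underlying idea is identical.
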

\begin{proof}
By constructing $\widetilde{G}$, the saboteur has a winning strategy if and only if she has a winning strategy for the reachability game on $\widetilde{G}$. The size of graph $\widetilde{G}$ is $\mathcal{O}(n^m)$, and the number of edges is square to the number of vertices (size). Lastly, reachability games can be solved in time $\mathcal{O}(|\text{vertices}|+|\text{edges}|)$.
\end{proof}

\begin{corollary}
If the budget for the saboteur $m$ is fixed, then a winning strategy can be found, if it exists, for the saboteur in time polynomial in the size of the sabotage game. 
\end{corollary}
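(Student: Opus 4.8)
The plan is to obtain this corollary as an immediate specialization of the preceding theorem, which already guarantees that a winning strategy for the saboteur can be computed in time exponential in $m$ but polynomial in $n = |V|$. The single observation driving the argument is that once $m$ is treated as a \emph{fixed constant} rather than as part of the input, the factor that was exponential in $m$ collapses into a fixed-degree polynomial in $n$. The substantive work is therefore carried entirely by the theorem; the corollary is essentially a bookkeeping step that reads off the dependence on $m$ and $n$ separately.

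Concretely, I would proceed as follows. First, I would recall the subset-construction arena $\widetilde{G}$ built in the theorem's proof, whose vertices record the positions of the traveler and the saboteur together with the set of marked vertices visible to the traveler. Since at most $m$ vertices can ever be marked, the number of such subsets is bounded by $\sum_{i=0}^{m}\binom{n}{i} = \mathcal{O}(n^m)$, so that $|\widetilde{G}| = \mathcal{O}(n^m)$, as stated there, with edge set $|\widetilde{E}| = \mathcal{O}(|\widetilde{G}|^2) = \mathcal{O}(n^{2m})$. Second, I would invoke the fact that reachability (and Büchi) games on $\widetilde{G}$ are solvable in time $\mathcal{O}(|\widetilde{G}| + |\widetilde{E}|)$, so that the total running time is $\mathcal{O}(n^{2m})$. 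Finally, I would substitute the hypothesis that $m$ is fixed: with $m = \mathcal{O}(1)$ the exponent $2m$ is a constant, so $n^{2m}$ is a polynomial in $n$, and hence polynomial in the size of the sabotage game, since the game's encoding has size at least $n$.

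I do not anticipate a genuine mathematical obstacle here, as the difficulty is already discharged by the theorem. The only point requiring care is making precise what counts as part of the input. I would state explicitly that \emph{fixed budget} means $m$ is a constant independent of the instance, so that absorbing $n^{2m}$ into a polynomial is legitimate, and I would note that the size of the game's description is polynomially related to $n$, so that ``polynomial in $n$'' and ``polynomial in the size of the sabotage game'' coincide. Were $m$ instead allowed to grow with the input, this collapse would fail and one would recover only the exponential bound of the theorem.
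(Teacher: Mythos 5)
Your proposal is correct and follows exactly the route the paper intends: the corollary is an immediate specialization of the preceding theorem, obtained by observing that the $\mathcal{O}(n^m)$-sized arena $\widetilde{G}$ and the linear-time reachability solver yield a bound of the form $n^{\mathcal{O}(m)}$, which collapses to a polynomial once $m$ is a fixed constant. Your added care about what counts as input and the polynomial relation between $n$ and the encoding size is a welcome (if minor) tightening of what the paper leaves implicit.
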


\section{Exploiting a bound on the number of rounds to~use~QBF solvers}
We specifically analyze the sabotage game under the scenarios when
(i) the saboteur has an unlimited budget (\textbf{B1}) or (ii)
the saboteur has a fixed (limited) budget (\textbf{B2}) and the traveler cannot observe the deletions (\textbf{T2}).

The idea is to polynomially bind the number of rounds the traveler has to reach the final vertex. More specifically,  if the traveler can reach the last vertex $v_f$, she (traveler) can reach $v_f$ in at most $n^{3}$ rounds, where $n = |V|$. We derive this bound for the number of rounds with the help of the two-player reachability game reduction presented earlier. %

We encode the sabotage game into a QBF formula of polynomial size, showing its containment in PSPACE \cite{Alur2005Apr}.

\begin{proposition}
In the game $\bar{G}$, if the traveler can win the reachability game, then it can reach a final vertex in $F$ in at most $\mathcal{O}(n^{3})$ rounds.
\label{prop:n3bound}
\end{proposition}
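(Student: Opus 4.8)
The plan is to bound the \emph{attractor rank} of the initial position in the perfect-information reachability game $\bar G$ built in Theorem~\ref{theorem:inEXPTIME}, and to show this rank is $\mathcal{O}(n^3)$ even though $\bar G$ itself has exponentially many vertices. Recall that one assigns to each vertex $x$ in the traveler's winning region a value $\mathrm{rank}(x)$, the least number of moves within which the traveler can force the token into $\tilde F$; this is precisely the attractor level of $x$. The memoryless attractor strategy has the property that along \emph{every} play consistent with it the rank strictly decreases on each move: at a traveler vertex she steps to a successor of rank one less, and at a saboteur vertex in the winning region every successor already has smaller rank. Hence every consistent play reaches $\tilde F$ within $\mathrm{rank}(v_\iota)$ rounds, and, since a non-negative integer strictly decreases each move, no configuration of $\bar G$ ever repeats on such a play. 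It therefore suffices to take the consistent play that witnesses $\mathrm{rank}(v_\iota)$ and bound its length by $\mathcal{O}(n^3)$ (the factor-two discrepancy between $\bar G$-moves and game rounds is absorbed by the $\mathcal{O}$).

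First I would fix a monotone progress measure $\mu$ on configurations that is non-decreasing along every play and takes at most $n+1$ values. Because marks are permanent the discovered component $T$ never shrinks; and under scenario~(ii), where $Obs$ reveals nothing, one has $T=\emptyset$ throughout while the suspect component $S$ equals the set of vertices the saboteur has visited, which only grows. In that case I would take $\mu := |S|$, which is monotone and bounded by $n$. I would then partition the witnessing play into maximal \emph{segments} on which $\mu$ is constant; since $\mu$ strictly increases at each segment boundary and is bounded by $n$, there are at most $n+1$ segments.

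The second step bounds the length of a single segment. On a segment $\mu$ is constant, and because the underlying set it counts is monotone, that set is frozen for the whole segment; the suspect and discovered components are then pinned down, so the only configuration data that can still vary are the traveler's position $u\in V$, the saboteur's position $v\in V'$, and the turn bit, giving at most $2n^2$ distinct configurations. As no configuration repeats along the witnessing play, each segment has length at most $2n^2$. Combining the two counts yields $\mathrm{rank}(v_\iota)\le (n+1)\cdot 2n^2 = \mathcal{O}(n^3)$, and therefore every consistent play, in particular the traveler's winning play, reaches $F$ within $\mathcal{O}(n^3)$ rounds.

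The main obstacle is the claim in the second step that on a $\mu$-constant segment the entire $\bar G$-state collapses to the two positions and the turn. This is immediate for scenario~(ii), where the absence of observation forces $T=\emptyset$ and makes $S$ the monotone set of saboteur-visited vertices, so freezing $|S|$ genuinely freezes $S$. For scenario~(i), unlimited budget, the difficulty is real: the suspect set can both gain vertices as the saboteur advances and lose them when the traveler observes a vertex to be clean, so $|S|$ need not be monotone and fixing its cardinality does not freeze the informational components. The crux is to argue that with an unlimited budget the permanent deletions dominate the dynamics---the governing monotone quantity being the set of marked vertices rather than $S$---and that any oscillation of the suspect set inside a segment can be reorganized without lengthening the play, so that the effective per-segment state is still controlled by the two positions. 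Carrying out this reduction cleanly, as opposed to the mechanical counting of the other steps, is where the actual work lies.
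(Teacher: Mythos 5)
Your decomposition is the same as the paper's: a memoryless winning strategy, a monotone component taking at most $n$ values, and an $\mathcal{O}(n^2)$ bound on any segment during which that component is frozen, multiplied to give $\mathcal{O}(n^3)$. Where the paper justifies the per-segment bound by saying that a repeated configuration would create a loop the saboteur can enforce (hence losing for the traveler), you use strict decrease of the attractor rank; these are two phrasings of the same non-repetition fact, and yours is the cleaner one.

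The issue you flag at the end is genuine, but you should know that the paper does not resolve it either: its proof simply asserts that ``the second component that stores the set of vertices visited by the saboteurs cannot decrease,'' whereas the second component of $\bar G$ is the suspect set $S$, whose update $(S\cup\{v,v'\})\setminus(Obs(u)\setminus P)$ can shrink it. The way to close this --- and the assumption the paper implicitly makes, as its QBF encoding in Proposition~\ref{prop:containtedPSPACE} forbids the traveler from ever touching any saboteur-visited vertex --- is to note that with an unlimited budget the saboteur loses nothing by marking every vertex she visits, and the forced revelation of observable marks then never removes anything from $S$; so the genuinely monotone quantity is the set of saboteur-visited vertices (equivalently $S\cup T$ under this normalization), it has at most $n-1$ increments, and the discovered set $T$, which is itself monotone, can be folded into the progress measure without changing the asymptotics. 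With that substitution your segment count goes through in both scenarios and your proof is complete; as written, it is complete only for scenario~(ii), which puts it on exactly the same footing as, and arguably slightly ahead of, the paper's own argument.
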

\begin{proof}
If the traveler has a winning strategy for the reachability game, then it has a memoryless winning strategy \cite{Mostowski1991,Emerson1991Oct}. 
In any valid play that respects a memoryless winning strategy, the second component that stores the set of vertices visited by the saboteurs cannot decrease. This component can remain unchanged for at most $n^{2}+1$ rounds, or the players will hit a losing loop for the traveler, that is, a loop that the saboteurs can enforce. The size of the second component is at most $n-1$ (it can be $V-\{v_f\}$). Hence, the traveler has to reach a final vertex in at most $(n-1)(n^{2+1}+1)$ rounds.
\end{proof}

\begin{remark}
If the traveler plays against $k$ saboteurs and has a winning strategy, then the traveler can reach the final vertex in at most $\mathcal{O}\left(n^{k+2}\right)$ rounds.
\end{remark}

\begin{proposition}
The problem of determining whether the traveler has a winning strategy in the scenario above is contained in $\mathrm{PSPACE}$.
\label{prop:containtedPSPACE}
\end{proposition}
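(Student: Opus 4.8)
The plan is to turn the polynomial bound on the number of rounds (Proposition~\ref{prop:n3bound}) into a quantified Boolean formula of polynomial size and then invoke the PSPACE-completeness of QBF. First I would work not with the original imperfect-information sabotage game but with the perfect-information reachability arena $\bar{G}$ built in the proof of Theorem~\ref{theorem:inEXPTIME}: its correctness already guarantees that the traveler wins $G$ iff the reachability player wins on $\bar{G}$, and, crucially, each vertex of $\bar{G}$ is described by the two positions in $V$, the suspect set $S$ and the discovered-mark set $T$ (each a subset of $V'$), and the turn bit, all encodable in $\mathcal{O}(n)$ Boolean variables. This holds uniformly for both scenarios considered here: when $m=\infty$ (B1) the budget guard (b) is vacuous, and when $Obs(u)=\emptyset$ (T2) the revealed set $P$ is forced to be empty; in neither case does the description of a vertex grow beyond polynomial size. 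Reducing to $\bar{G}$ is what lets me model the traveler's strategy by plain quantifier alternation, sidestepping any belief-state blow-up.

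Second, I would use Proposition~\ref{prop:n3bound} to fix a horizon $R = \mathcal{O}(n^{3})$ with the property that the reachability player wins on $\bar{G}$ iff she wins within $R$ moves. This replaces the unbounded game by a bounded-horizon game whose game tree has polynomial depth, which is exactly what makes a polynomial-size formula possible.

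Third is the construction of the formula $\Phi$ itself. I would introduce, for each round $r \le R$, a block of variables encoding the configuration $(u_r,v_r,S_r,T_r)$, with the traveler's moves existentially quantified and the saboteur's moves universally quantified, interleaved in prenex form $\exists\,(\text{round }1\text{ traveler})\,\forall\,(\text{round }1\text{ saboteur})\,\exists\cdots$. Because $t_r$ is quantified after all earlier saboteur blocks, its value may depend on them, so a satisfying assignment is exactly a traveler strategy on $\bar{G}$; and since $\bar{G}$ is perfect information, this is the correct strategy model. The quantifier-free matrix $\psi$ asserts, in polynomial size, (i) the initial configuration equals $(\iota_T,\iota_B,\{\iota_B\},\emptyset)$; (ii) each traveler transition obeys edge rule~1, i.e.\ $(u_r,u_{r+1})\in E_T$ and $u_r,u_{r+1}\notin T_r$; (iii) each saboteur transition respects edge rule~2, including the updates of $S$ and $T$ and the guards (a)--(c); and (iv) some $u_r \in F$. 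To get the game semantics right, the legality of each universally chosen saboteur move is placed on the left of an implication, so that an illegal saboteur move trivially satisfies $\psi$ (the saboteur gains nothing by moving illegally), whereas the traveler's legality conditions sit on the right, obliging the existential player to choose legal moves. Then $\Phi$ is true iff the traveler has a winning strategy on $\bar{G}$ within $R$ rounds, hence iff she wins the sabotage game; since $|\Phi|$ is polynomial in $n$ and QBF lies in PSPACE, the problem is in PSPACE.

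The step I expect to be the main obstacle is the matrix $\psi$: getting the implication structure exactly right so that illegal saboteur moves do not spuriously help the traveler while illegal traveler moves still lose, and faithfully encoding the set-valued updates of edge rule~2, in particular $\bigl(S\cup\{v,v'\}\bigr)\setminus\bigl(Obs(u)\setminus P\bigr)$ and the guard $P\subseteq Obs(u)\cap S$, as polynomial-size Boolean constraints over the bit-vector encodings. Everything else, namely bounding the horizon and counting variables, is routine bookkeeping. As a sanity check one may note an equivalent route: the same bounded-horizon game on $\bar{G}$ is decided by an alternating Turing machine in polynomial time, and $\mathrm{APTIME} = \mathrm{PSPACE}$.
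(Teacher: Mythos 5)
Your proposal is correct and follows the paper's overall route---bound the number of rounds via Proposition~\ref{prop:n3bound}, unroll the bounded-horizon game into a prenex QBF with one quantifier block per half-round (existential for the traveler, universal for the saboteur, legality of universal moves placed on the left of an implication), and invoke the PSPACE-completeness of QBF---but the object you encode per round is heavier than what the paper uses. The paper's formula $\psi$ quantifies only over vertices of the \emph{original} graph: $r_i$ (traveler position) and $c_i$ (saboteur position), with a matrix built from the edge predicate $e(\cdot,\cdot)$ and the constraint $\theta_i = \bigwedge_{0\le j \le i} (r_i \neq c_j)$. It never materializes the suspect set $S$ or the discovered set $T$: in the two scenarios at hand (unlimited budget, or limited budget with unobservable deletions) a sure-winning traveler must avoid every vertex the saboteur has ever visited, so the belief-state bookkeeping of $\bar{G}$ collapses to pairwise inequalities against the quantified history $c_0,\dots,c_i$, which is already available in the formula for free. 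Your encoding, which carries the full $\bar{G}$-configuration $(u_r,v_r,S_r,T_r)$ as $\mathcal{O}(n)$ bits per round, is still polynomial overall ($\mathcal{O}(n^{4})$ variables) and therefore also lands in PSPACE; what it buys is uniformity---it would survive observation functions beyond T2 without rethinking the matrix---at the price of exactly the encoding difficulties you flag, namely the set update $\bigl(S\cup\{v,v'\}\bigr)\setminus\bigl(Obs(u)\setminus P\bigr)$ and the guard $P\subseteq Obs(u)\cap S$, all of which the paper's leaner formula sidesteps. Your closing observation that the bounded-horizon game on $\bar{G}$ is decidable by an APTIME machine is a valid alternative finish that the paper does not use.
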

\begin{proof}
If the traveler reaches $v_f$, then exists a strategy using which the traveler can reach $v_f$ in at most $\gamma = n^3$ rounds.%
Let $\rho = \left(t_0,b_0\right),\cdots, \left(t_i,b_i\right),\cdots, \left(t_\gamma,v_\gamma\right)$ be a play, where $t_0 = \iota_T$ and $c_0 =\iota_B$. Let $e\colon V \times V \to \{\top,\bot\}$ be a predicate.
\[
e(u,v) = \begin{cases}
\top, \text{ if } (u,v) \in E, \\
\bot, \text{ otherwise }.
\end{cases}
\]

We encode the sabotage game in a QBF formula $\psi$ with $\gamma$ existential variables, $\gamma-1$ universal variables, and $2(\gamma-1)$ quantifier alternations recursively.
{
\begin{align*}
    &\psi \triangleq  \exists r_1\colon e\left(r_0,r_1\right) \land \forall c_1 \neq v_f\colon \\   & \left( e\left(c_0,c_1\right) \land r_1 \neq v_f \right) \Rightarrow \left(r_1 \neq c_0 \land r_1 \neq c_1 \land \psi_1 \right). \\
    &\psi_1 \triangleq \exists r_2\colon e(r_1,r_2) \land \forall c_2 \neq v_f\colon \\  & \left(e\left(c_1,c_2\right) \land r_2 \neq v_f \right) \Rightarrow  \left(r_2 \neq c_0 \land r_2 \neq c_1  \land r_2 \neq c_2 \land  \psi_2 \right). \\
     & 
     \qquad \qquad \qquad \quad \qquad \vdots \\
    &\psi_{i-1} \triangleq \exists t_{i}\colon e\left(t_{i-1},r_i\right) \land \forall b_{i} \neq v_f\colon \\  &
    \left(  e\left(b_{i-1},c_i\right) \land  r_i \neq v_f \right) \Rightarrow \underbrace{\bigwedge_{0\leq j \leq i} (r_i \neq c_j)}_{\theta_{i-1}} \land \psi_{i}. \\
     & \qquad \qquad \qquad \quad  \qquad \vdots \\
    &\psi_{\gamma-1} \triangleq \exists t_\gamma\colon e(t_{\gamma-1},t_\gamma) \land t_\gamma = v_f.
\end{align*}
}
The formula $\psi_i$ ensures that the traveler and the saboteur move along the edges by checking if $\left(r_i,t_{i+1}\right)\in E$ and $\left(c_i,b_{i+1}\right)\in E$, respectively. If the traveler reaches the final vertex $v_f$, the formula $\psi$ is satisfied. If not, that is, $t_{i+1} \neq v_f$, then we check if the traveler has not visited any of the vertices previously visited by the saboteur. We check that $\theta_{i}$ is satisfied, that is, if $t_{i+1} \notin \left\{c_0,c_1,\dots,b_{i+1}\right\}$. If the traveler has not reached the final vertex and she has not been captured so far, then the sub-formula $\psi_{i+1}$ should be satisfied. Thus, if the formula $\psi$ is satisfiable, then the traveler has a winning strategy that can be recovered from a model for $\psi$.
\end{proof}

Any classic sabotage game can be encoded into an equivalent generalized sabotage game in which the saboteur has an unlimited budget. 
Say the classic sabotage game is played on the (undirected) graph $\bar{G} = \left(\bar V, \bar E\right)$ and $\bar F$ is the set of final vertices, we construct a new sabotage game $G = \gtuple$. The new sabotage game $G$ is played on the set of vertices $V = \bar{V} \cup \left\{v_e \mid e \in E\right\} \cup \{z\}$, that is, a new vertex is introduced for every edge in the original graph between the two endpoints of the old edge. Additionally, $V$ contains one more vertex $z \notin \bar{V}$. The edges for the traveler $E_T$ are defined as follows. If $e = (u,v)$ is an edge in $E$, then $\left(u,v_e\right)$ and $\left(v_e,v\right)$ are edges in $E_T$. Additionally, $\left(v_e,u\right)$ and $\left(v,v_e\right)$ are also edges in $E_T$ (these edges capture the other direction). The traveler has to perform two actions to make one move in the original graph. The edges for the saboteur $E_B = \left\{\left(v_e,z\right) \mid e \in E \right\} \cup \left\{\left(z,v_e\right) \mid e \in E \right\}$. %

In the new sabotage game, the saboteur is initially at $z$, that is, $\iota_B = z$. The construction ensures that in every alternate move, the saboteur can delete any vertex $v_e$. %
The budget $m$ for the saboteur is $\infty$.
The traveler is initially at vertex $\iota$, then $\iota_T =\iota$ until the final vertex $F = \bar{F}$. %

\begin{remark}
\label{thm:unlimited}
The traveler has a winning strategy in sabotage game~$G$ if she wins in the classic sabotage game on $\bar{G}$.
\end{remark}

\begin{corollary}
The above remark, in conjunction with Proposition~\ref{prop:containtedPSPACE} and the PSPACE-hardness of the classical sabotage game~\cite{Loding2003Aug} implies that solving the sabotage game~when the saboteur has an unlimited budget is PSPACE-complete.
\end{corollary}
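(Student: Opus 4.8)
The plan is to establish the two directions of PSPACE-completeness separately and then assemble them from results already in hand. For membership, I would invoke Proposition~\ref{prop:containtedPSPACE} directly: the unlimited-budget setting (B1) is exactly one of the two scenarios it covers, and its proof routes through the polynomial round bound of Proposition~\ref{prop:n3bound} to build a QBF formula $\psi$ of polynomial size that is satisfiable precisely when the traveler has a winning strategy. Since QBF is decidable in PSPACE, the upper bound needs no further argument. (Note that under an unlimited budget the traveler must treat every saboteur-visited vertex as deleted irrespective of the observation function, so the same encoding subsumes all of (T1), (T2), (T3).)

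For the lower bound, the plan is to exhibit a polynomial-time many-one reduction from the classical sabotage game, which is PSPACE-hard by \cite{Loding2003Aug}. The reduction is exactly the construction preceding Remark~\ref{thm:unlimited}: given a classical game on $\bar G = (\bar V, \bar E)$ with final set $\bar F$, build the generalized game $G$ on $V = \bar V \cup \{v_e \mid e \in \bar E\} \cup \{z\}$, routing each traveler move $u \to v$ through the edge-vertex $v_e$ and seating the saboteur at the hub $z$ with budget $m = \infty$, so that on each alternate turn she may delete any $v_e$. I would first verify that this map is computable in polynomial time, which is immediate since $|V| = |\bar V| + |\bar E| + 1$ and both $E_T$ and $E_B$ have size linear in $|\bar E|$.

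The substantive step is correctness, namely that the traveler wins $G$ if and only if she wins the classical game on $\bar G$; this is the content of Remark~\ref{thm:unlimited}, which I would assume. The main obstacle, and the reason the equivalence deserves care, is the mismatch between the two deletion mechanisms and the timing it induces: in the classical game the saboteur deletes an \emph{edge} after each single traveler move, whereas in $G$ the traveler needs \emph{two} moves to cross an original edge while the saboteur deletes a \emph{vertex} $v_e$ as she shuttles between $z$ and the edge-vertices. I would argue that deleting $v_e$ is operationally identical to removing the edge $e$, since every path using $e$ must pass through $v_e$, and that the one-deletion-per-original-move cadence reproduces the classical alternation. The forward direction (a classical win lifts to a win in $G$) is the stated content of the remark; the reverse direction (a win in $G$ descends to a classical win) is where one must confirm that the saboteur in $G$ is no weaker than her classical counterpart, using the rule that she cannot delete a vertex currently occupied by the traveler to rule out degenerate mid-traversal captures.

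Finally, I would conclude by combining the two bounds: the reduction transfers PSPACE-hardness from the classical game to the unlimited-budget generalized game, and together with membership from Proposition~\ref{prop:containtedPSPACE} this yields PSPACE-completeness, as claimed.
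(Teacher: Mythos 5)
Your proposal is correct and follows essentially the same route as the paper, which states this corollary without a separate proof precisely because it is the assembly you describe: membership via the QBF encoding of Proposition~\ref{prop:containtedPSPACE}, and hardness via the polynomial-time edge-subdivision reduction from the classical game together with its known PSPACE-hardness. Your additional care about the bidirectionality of Remark~\ref{thm:unlimited} (the paper states only the forward direction, while a many-one reduction needs the equivalence) is a fair observation, but it concerns the remark rather than the corollary and does not change the argument.
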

\subsection{Randomized Experiments}

We show the asymmetric hardness of solving generalized sabotage games from the perspective of the attacker and the defender through experiments. We record synthesis time for finding defender and attacker strategies on sabotage games on networks generated uniformly at random with the number $n$ of locations as the parameter. In each random network instance, each location is connected to every other location for both the attacker and the defender. 
For each data point, we generate ten random instances by fixing all the other parameters and record the synthesis time as the average over these ten instances~(\cref{fig:bench}) under the assumption that the defender can a) use at most two firewalls and b) scan an unlimited number of times. We performed the experiments on macOS with an M1 processor and 8 GB memory.

\begin{figure}[t!]
\centering
\begin{tikzpicture}
\begin{axis}[
    axis lines*=left,
	xlabel={number of locations},
	ylabel={synthesis time [\si{\second}]},
	enlarge y limits=false,
    enlarge x limits=false, 
	width=.45\columnwidth, height=.45\columnwidth,
     ymax=6.5,
    xtick={10,40,80},
    ytick={0,6.5},
        tick style={black,thick},
    clip mode=individual,
    x tick style={yshift=-0.4cm},
    x tick label style={yshift=-0.4cm},
    x axis line style={yshift=-0.4cm},
    y axis line style={xshift=-0.4cm},
    y tick style={xshift=-0.4cm},
    y tick label style={xshift=-0.4cm},
    yticklabel style={
  /pgf/number format/precision=3,
  /pgf/number format/fixed},
    ]

\addplot[color=blue,mark=*,thick] coordinates {
	(10, 0.01)
	(20, 0.02)
	(30, 0.05)
	(40, 0.08)
	(50, 0.09)
	(80, 0.27)
};
\addplot[name path=meanplusvar,color=blue!60] coordinates {
	(10, 0.00)
	(20, 0.00)
	(30, 0.02)
	(40, 0.07)
	(50, 0.07)
	(80, 0.25)
};
\addplot[name path=meanminusvar,color=blue!60] coordinates {
	(10, 0.02)
	(20, 0.04)
	(30, 0.08)
	(40, 0.09)
	(50, 0.11)
	(80, 0.29)
};

\addplot [red!30] fill between[of=meanplusvar and meanminusvar];
\end{axis}
\end{tikzpicture}
\begin{tikzpicture}\begin{axis}[
    axis lines*=left,
	xlabel={number of locations},
	enlarge y limits=false,
    enlarge x limits=false, 
	width=.45\columnwidth, height=.45\columnwidth,
    xtick={10,12,13,15},
    ytick={0.04,6.5},
        tick style={black, thick},
    clip mode=individual,
    x tick style={yshift=-0.4cm},
    x tick label style={yshift=-0.4cm},
    x axis line style={yshift=-0.4cm},
    y axis line style={xshift=-0.4cm},
    y tick style={xshift=-0.4cm},
    y tick label style={xshift=-0.4cm},
    yticklabel style={
  /pgf/number format/precision=3,
  /pgf/number format/fixed},
      xticklabel style={
  /pgf/number format/precision=3,
  /pgf/number format/fixed},
    ]
\addplot[color=red,mark=*,thick] coordinates {
	(10, 0.04)
	(12, 0.15)
	(13, 1.3)
	(15, 6.5)
};
\end{axis}
\end{tikzpicture}
\caption{Strategies are easy to compute for the defender (left) and hard for the attacker (right).}
\label{fig:bench}
\end{figure}
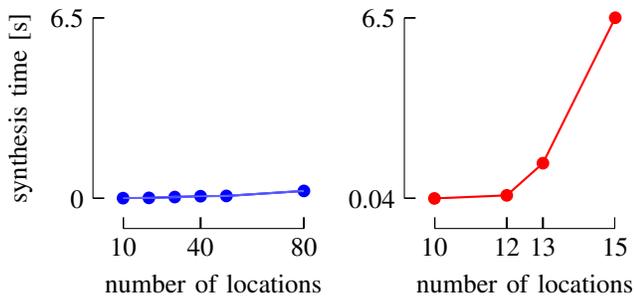

Even when the defender can use only two firewalls, the synthesis time for an attacker's strategy grows exponentially. For the experiment with $20$ locations, the resulting reachability game for the attacker has $\mathcal{O}(2^{20})$ vertices, and the experiment timed out after $10$ minutes. In contrast, we synthesized the strategies for the defender in less than a second. We find that from the defender's perspective, the number of vertices does not affect the scalability of the technique.
We analyze sabotage games with a number of locations ranging from $10$ to $100$. This range is characteristic of the number of \emph{middleboxes} in enterprises of small-medium size~\cite{sherry:2012}.

\section{Related work}
\label{sec:related_work}

Classic sabotage games~\cite{vanBenthem2005} end in a polynomial number of rounds because the saboteur can remove all edges (links) in polynomial time. Consequently, the game is at most PSPACE-complete on the size of the graph~\cite{Loding2003Aug}.  

Rohde proposes and studies the path sabotage logic to balance the asymmetry between the traveler ‘moving’ and the saboteur ‘deleting.’ Movements are local, whereas there is a global choice for edge deletion. The complexity of model checking for PSL is also PSPACE-complete~\cite{Rohde2004MovingIA}.
The decision problem for the randomized version of the sabotage game is also PSPACE-complete~\cite{Klein2012Jul}. A characterization of sabotage games with multiple destinations on weighted trees has been used to analyze dynamic network reliability or vulnerability~\cite{Kvasov2016Mar}. 

In the quantitative analysis of sabotage games, the saboteur is given a budget
that she distributes amongst the edges of the graph. At the same time, the traveler attempts to minimize the quantity of budget witnessed while completing his task. The problem of determining if Runner has a strategy to ensure a cost below some threshold is EXPTIME-complete, while if the budget is fixed apriori, then the game can be solved in PTIME~\cite{brihaye2015quantitative}. 
Even though the budget restricts the saboteur, the main point of difference is that in our setting, the traveler loses the game when she visits a marked vertex. The complexity for the traveler comes due to her incomplete knowledge.

Dynamic network games~\cite{Radmacher2008Feb,Gruner2013Jul} played between a constructor (traveler) and a destructor (saboteur) vary from generalized sabotage games in the following aspects: both players change the structure of the graph, there is perfect information, and the objective of the constructor is to maintain network connectivity. 
In poison games---yet another game on evolving structures---the saboteur can only remove vertices adjacent to the traveler's current location~\cite{Duchet1993May}.

Trap games~\cite{bonato:2011} that are a variant of cops and robber~\cite{Aigner1984Apr} in which the cops can use traps can be viewed as a generalization of the classic sabotage game. The main result, however, is from a graph structural point of view; the authors show that one cop and a fixed number of visible traps are insufficient to capture a robber on a graph where two cops without traps can catch the robber. 
In trap games, both players have perfect information, and there is no constraint on the budget, assumptions we relax in generalized sabotage games. 
Additionally, we approach the problem from an algorithmic and complexity-theoretic point of view. The computational complexity of the classic game of cops and robbers and its variants, like the guard game, etc., is EXPTIME-complete~\cite{Samal2011,Mamino2013Mar}. However, the hardness of these games comes from the variable number of cops rather than incomplete information.

Attack graphs~\cite{jha:2002,sheyner:2002} and attack trees~\cite{schneier:1999} represent the attacker's exploit paths, which take advantage of system vulnerabilities and are helpful in model-based network security verification. Attack-defense trees address the lack of defender actions in attack trees~\cite{kordy:2010}. Attack-defense trees can be encoded into stochastic two-player games to study the interaction between players and account for quantitative and probabilistic aspects of security scenarios~\cite{aslanyan:2016}. %

Two-player games are used for finding strategies for active deception~\cite{abhishek:2020}. Partially observable stochastic games capture the interaction between an attacker and a defender with one-sided partial observations~\cite{horak:2019}. In this game, the attacker aims to exploit and compromise the system without being detected with complete observation, and the defender is to detect the attacker and reconfigure the honeypots. On the contrary, in our work, it is the attacker (saboteur) that marks (disables) the nodes, while the defender (traveler) has to navigate the network while avoiding these nodes. Moreover, the traveler does not have the defender covers each targetmation games model an attacker with incomplete information and form a belief about the defender’s unit~\cite{horak:2017}. The players employ Bayesian rules to update their beliefs about the state of the game. An analysis of this scenario using Stackelberg games assumes that the attacker knows the probability that each target is covered by the defender but is oblivious to the detailed timing of the coverage schedule~\cite{vorobeychik:2012}. In contrast, in our formulation, the observability of disruptions is a function of the current location of the defender and a fixed budget restricts the attacker.

\section{Conclusion}
\label{sec:conclude}

Defenders can proactively adapt their security measures in response to evolving threats by modeling dynamic network routing with sabotage games. In particular, our ability to determine the potential for network disruption when some nodes are already compromised can aid in designing resilient security strategies that prioritize protecting critical nodes and rapidly mitigating threats. In realistic attack pattern scenarios, the complexity results of sabotage games are tractable for the defender. A practical advantage or encoding security as a reachability game is knowing whether the saboteur can disrupt the network when specific nodes are already compromised.  Stakeholders can leverage the sabotage games modeling framework to strategically enhance network security through a ``what-if'' analysis for fortifying potentially compromised nodes.

\clearpage
\IEEEtriggeratref{23}
\printbibliography

\end{document}